\documentclass[a4paper]{amsart}
\usepackage[foot]{amsaddr}
\usepackage[T1]{fontenc}
\usepackage{fullpage}
\usepackage{amsmath,amsthm,amssymb,amscd,amsfonts}
\usepackage{mathtools}
\usepackage[dvipsnames]{xcolor}
\definecolor{darkgreen}{rgb}{0,0.5,0}
\definecolor{darkblue}{rgb}{0,0,0.5}
\usepackage{cite}
\usepackage[colorlinks=true,urlcolor=darkblue,citecolor=darkgreen,linkcolor=darkblue]{hyperref}
\usepackage{graphicx}
\usepackage{lmodern}
\usepackage{tikz}
\usetikzlibrary{arrows.meta,calc}
\PassOptionsToPackage{hyphens}{url}

\newcommand{\bbF}{\mathbb{F}}
\newcommand{\bbZ}{\mathbb{Z}}
\newcommand{\GLn}{\mathrm{GL}_N(\bbF_2)}
\newcommand{\Spn}{\mathrm{Sp}_{2N}(\bbF_2)}

\newcommand{\Span}{\operatorname{span}}
\newcommand{\pmin}{p_{\mathrm{min}}}
\newcommand{\pmax}{p_{\mathrm{max}}}
\newcommand{\qmin}{q_{\mathrm{min}}}
\newcommand{\qmax}{q_{\mathrm{max}}}
\newcommand{\enc}{\operatorname{enc}}
\newcommand{\cost}{\operatorname{cost}}
  
\usepackage[sc]{mathpazo}
\newtheorem{thm}{Theorem}[section]
\newtheorem{prop}[thm]{Proposition}

\newtheorem{lem}[thm]{Lemma}

\theoremstyle{remark}
\newtheorem{remx}[thm]{Remark}

\theoremstyle{definition}
\newtheorem{defn}[thm]{Definition}

\newcommand{\qedblack}{\hfill \scalebox{0.7}{\ensuremath{\blacksquare}}}

\begin{document}

\title{Clifford synthesis is NP-complete}

\author{S\o ren Fuglede J\o rgensen}
  \address{Kvantify, Rosenvængets Allé 25, DK-2100 Copenhagen, Denmark}
  \email{sfj@kvantify.dk}

\date{\today}

\begin{abstract}
  We show that the problem of synthesizing a quantum circuit for a given Clifford operator using at most a specified amount of resources is NP-complete for several resource models of practical interest.

  Specifically, we study synthesis of a given Clifford operator using single-qubit Clifford gates and two-qubit Clifford gates, with full qubit connectivity and no additional ancilla qubits. With single-qubit gates free and CNOT as the only two-qubit gate, we show that deciding whether a given target admits an implementation using at most a specified number of CNOT gates is NP-complete, and we show that the problem remains NP-complete when any two-qubit Clifford gate is allowed and the budget counts all such gates. Moreover, we show that both results continue to hold when SWAP gates may be inserted for free, and when single-qubit gates are also considered part of the budget.

  In each case, we build on recent work by Li and Guan and show that there are polynomial-time reductions from Hamiltonian path in grid graphs to these decision problems, using a number of qubits quadratic in the number of grid vertices.
\end{abstract}

\maketitle


\section{Introduction}
Clifford operators play a central role in quantum error correction \cite{Gottesman1997} and magic state distillation \cite{Bravyi2005,Knill2005}, making efficient Clifford circuit synthesis an important task in quantum computing. It is also useful in quantum circuit optimization in which peephole methods replace Clifford subcircuits by equivalent circuits with fewer gates or smaller depth, reducing the resource requirements of the larger circuit \cite{Kliuchnikov2013}.

As such, a significant amount of work has been devoted to developing algorithms for Clifford circuit synthesis. These range from efficient algorithms that can synthesize instances with large qubit count \cite{Aaronson2004} but may produce circuits with high gate counts or depths, to methods for producing lookup tables giving optimal circuits for small numbers of qubits \cite{Bravyi2022}, and exact and heuristic synthesis algorithms covering intermediate regimes aiming to minimize computational requirements of the resulting circuits, generally measured in terms of circuit gate count and circuit depth \cite{Dehaene2003, Kliuchnikov2013,Amy2013, Schaeffer2014, Maslov2018,Bravyi2021,Bravyi2021b,Schneider2022,Maslov2022,Maslov2023,Peham2023, Bao2024, Shaik2025,Webster2025, Brugiere2025,Joergensen2026B}. A similar body of literature exists for the closely related problem of minimizing circuits that do not allow general Clifford gates, but only CNOT gates, and for variations such as those that allow CNOT and $T$ gates \cite{Kazuo2002, Patel2008, Meuli2018, Amy2018, Brugiere2020, Brugiere2021, Bataille2022, Gheorghiu2023, Shaik2024, Bu2025, Christensen2025,Joergensen2026, Gong2026, Acuaviva2026,Li2026}. Indeed, our work follows a line of earlier work on the complexity of circuit optimization \cite{Amy2018,Kang2023, Kjelstrom2025, Acuaviva2026, Li2026}, and builds on the recent work of Li and Guan \cite{Li2026} which showed NP-completeness of CNOT circuit synthesis.

For contemporary quantum hardware, it is generally the case that single-qubit gates are available at lower cost and with higher fidelity compared to two-qubit entangling gates. This motivates the study of Clifford circuit synthesis under cost models that are aimed at minimizing the use of two-qubit gates, and our focus is on minimizing the use of CNOT gates. By the stabilizer formalism, any Clifford operator defines a symplectic matrix $F \in \Spn$ acting on the $2N$-dimensional symplectic vector space over $\bbF_2$, and the goal is to write $F$ as a product of single-qubit Clifford gates and CNOT gates using as few CNOT gates as possible.

Our main result is that deciding whether a given Clifford operator can be synthesized using at most a specified number of CNOT gates, with single-qubit gates being free, is NP-complete. Throughout, we will allow full connectivity of the qubits, and we will not consider additional ancilla qubits. Specifically, assume that $F$ is the symplectic matrix of a Clifford operator. If $C$ is a circuit composed of single-qubit Clifford gates and CNOT gates which implements $F$, let $\cost(C)$ be the number of CNOT gates in this circuit, and let $\cost(F)$ denote the minimal cost among all circuits that implement $F$. Of particular interest will also be linear reversible operators, which are Clifford operators that can be implemented using only CNOT gates. Any such operator may be represented by an invertible binary matrix $A \in \GLn$ and defines a Clifford operator with symplectic matrix
\[
  \rho(A) = \begin{pmatrix} A & 0 \\ 0 & (A^{-1})^\top \end{pmatrix}
\]
acting on the same number of qubits.
\begin{thm}
  \label{thm:main}
  Deciding whether $\cost(F) \leq k$ is NP-complete. NP-completeness holds already for targets $F = \rho(A)$ with $A$ lower unitriangular and $(A - I)^3 = 0$.
\end{thm}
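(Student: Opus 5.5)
Membership in NP is the easy half. A certificate is a circuit $C$ with at most $k$ CNOT gates, interleaved with single-qubit Clifford layers. Each layer is a tensor product of $N$ elements of the $6$-element single-qubit Clifford group modulo Paulis. Since $k$ is bounded by the trivial upper bound $O(N^2)$ on $\cost(F)$ (from any standard synthesis algorithm), we may assume $k$ is polynomial. The verifier multiplies the $O(k)$ symplectic matrices and compares the product with $F$, which takes polynomial time.

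For hardness, I would reduce from the CNOT synthesis problem shown NP-complete by Li and Guan, which itself goes through Hamiltonian path in grid graphs. Their hard instances are linear reversible maps $A$. I expect that, by inspecting their construction, these can be taken lower unitriangular with $(A-I)^3=0$, i.e.\ of block form with three layers of qubits (source, middle, target). The key claim to prove is then a rigidity statement: for such $A$, the Clifford cost of $\rho(A)$ with free single-qubit gates equals the CNOT-only cost of $A$, or at least the two agree at the relevant threshold $k$. One inequality is trivial, since a CNOT circuit for $A$ implements $\rho(A)$.

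For the reverse inequality, take a Clifford circuit for $\rho(A)$ with $m$ CNOTs. Push all single-qubit gates to the end, so each CNOT becomes a conjugated two-qubit gate: one of the finitely many "CNOT-like" symplectic transvections $v \mapsto v + \omega(v,a)b$, with $a,b$ supported on a pair of qubits. Followed by a final local layer $L$, we get $\rho(A) = L\,T_m\cdots T_1$. The goal is to extract from this a CNOT circuit of length at most $m$ implementing $A$, or at least a combinatorial certificate, such as the Hamiltonian-path structure Li and Guan read off an optimal circuit, with the same size. The approach I would try first is a lower-bound argument mirroring Li--Guan's. Their proof lower-bounds CNOT count by counting how many "rank-one updates" are needed to create each off-diagonal entry of $A$ and how entries can be shared along a path. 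I would replace rank over $\bbF_2$ of $A-I$ restricted to blocks by rank of $F-I$ (or of its $X$-to-$X$ block relative to the local frame). Each conjugated CNOT changes $F$ by a transvection of rank at most $2$, restricted to two qubits. Any local layer preserves the qubit partition, so the support-based counting survives. If the counting in Li--Guan is phrased in terms of ranks of submatrices indexed by qubit subsets, it should transfer with single-qubit frames absorbed.

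The main obstacle is precisely this transfer. Conjugated CNOTs such as CZ, or CNOTs in rotated bases, can mix $X$ and $Z$ parts. A priori they might beat pure CNOT circuits on $\rho(A)$, for instance by using the $Z$-block $(A^{-1})^\top$ cleverly. The nilpotency condition $(A-I)^3=0$ is what I would exploit here. It makes $A^{-1} = I + (A-I) + (A-I)^2$ explicit, and it means each qubit is a pure source, middle, or target, so one can argue that in an optimal circuit the local frames can be normalized to the computational one. Concretely, I would project the circuit onto the source–target block of $F-I$ and show that each gate contributes at most what a CNOT contributes to the invariant used by Li--Guan. If that fails, the fallback is to modify the gadget, padding each qubit with extra ancilla-like structure inside $A$ so that any non-CNOT-like gate is provably wasteful. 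This would keep the qubit count quadratic in the number of grid vertices.
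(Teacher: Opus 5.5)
Your NP-membership argument is fine and is essentially the paper's: cap $k$ by a polynomial synthesis bound, certify with a circuit whose single-qubit layers have size $O(N)$, and verify by multiplying tableaux.

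The hardness half has a genuine gap. The decisive claim is that on the hard instances no Clifford circuit with at most $k$ CNOTs beats CNOT-only circuits. You state it only as a hope. The obstacle you name, locally rotated CNOTs mixing $X$ and $Z$ parts, is left unresolved, and the ``padding'' fallback is unspecified. The tools you propose would also not close it:
\begin{itemize}
\item Invariants built from ranks of $F-I$ or its blocks are far too weak. Both $A-I$ and $A^{-1}-I$ are supported on the $d+1$ data columns, so $\mathrm{rank}(F-I)\le 2(d+1)\le 2m$, whereas $k=R+m-1$.
\item Li--Guan's argument is not a rank count on the final matrix. It is dynamical, about which parities sit on wires during the circuit: every recorder must be targeted, and if $f_v$ never occurs on a wire, each of its recorders is targeted twice. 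In a Clifford circuit, ``targeted'' and ``the parity on a wire'' are not frame-invariant, and mid-circuit a qubit may carry no parity at all. This is exactly where shortcuts could hide.
\item The condition $(A-I)^3=0$ constrains only the final operator, not the intermediate tableaux, so it gives no handle for normalizing local frames.
\end{itemize}
(Incidentally, $v\mapsto v+\omega(v,a)b$ is not symplectic for independent $a,b$; a locally conjugated CNOT is $v\mapsto v+\omega(v,a)b+\omega(v,b)a$ with $\omega(a,b)=0$.)

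The missing idea is a frame-invariant substitute for ``the parity on a wire''. Let $W$ be the Lagrangian spanned by the initial $X$-rows. Each qubit plane $P_i$ meets $W$ in at most a line, and single-qubit gates cannot change that line. Classifying each CNOT by its effect on these lines gives six types, and only type $T$ acts as a row addition $(u,v)\mapsto(u,u+v)$. Counting appearances gives $L-B=Z+(P+M)/2$. Appearances outside type $T$ gates and $U$ gates that re-fill a pair emptied by a $D$ gate number at most $3M+4Z\le 6(L-B)$.

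Consequently, if $f_v$ is never represented, at most $6(L-R)\le 6(m-1)$ recorders can end at $y+f_v$. With Li--Guan's $m$ copies this gives no contradiction. That is why the paper raises the multiplicity to $r=6(m-1)+1$. Then every $f_v$ must appear, which forces $B=L=R+m-1$ with every appearance a fresh vector, so every CNOT is of type $T$. The represented vectors then trace out a CNOT-only circuit for $A$ of the same length, to which Li--Guan's argument applies.

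Your padding fallback points in this direction, but the invariant $W\cap P_i$, the six-type accounting, and the quantitative choice of $r$ are the substance of the proof, and the proposal has none of them.
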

Amy et al.\ \cite{Amy2018} studied parity networks, namely CNOT circuits that generate a prescribed set of parities, possibly at different stages of the computation. They proved that deciding whether such a network exists within a given gate budget is NP-complete when all CNOT gates share a common target qubit. Their reduction encodes a hypercube walk visiting a prescribed set of vertices. Li and Guan \cite{Li2026} show NP-completeness for CNOT-minimization of linear reversible operators by using a tight gate budget to force a Hamiltonian path in the induced subgraph on a prescribed set of vertices: the reduction is from Hamiltonian path in induced subgraphs of the hypercube, closely related to the Hamiltonian path problem in grid graphs for which NP-completeness is known \cite{Itai1982}. Any path through such a subgraph has the property that two consecutive vertices differ by a single bit flip, corresponding to a CNOT operation. In turn, the composition of these operations defines a linear operator. To ensure that any circuit for such an operator has the desired structure, they further introduce \emph{recorder qubits}; whenever a new parity is encountered, its value is copied to a number of recorder qubits, helping ensure that the circuit uses the intended sequence of CNOT operations, and that no shortcuts are taken. They show that $m$ recorder qubits per parity suffice to enforce the intended structure, where $m$ is the number of vertices in the subgraph. The final linear operator is one acting on $m^2 + d + 1$ qubits. See Figure~\ref{fig:reduction-overview} for an example.

In our case, we have the further obstacle that shortcuts through the use of single-qubit Clifford gates must also be prevented. We find that if we build on the construction of Li and Guan and extend the number of recorder qubits for each parity from $m$ to $6(m - 1) + 1$, the resulting linear operator has the property that single-qubit Clifford gates cannot be used to bypass the intended sequence of CNOT operations. That is, any circuit implementing the operator with a CNOT gate count below some carefully chosen threshold can be transformed into a CNOT-only circuit without increasing the CNOT gate count. As such, we also end up with a polynomial-time reduction from the Hamiltonian path problem in a grid graph with specified endpoints.

While preparing this manuscript, we became aware of the recent work of Keller and Kueng \cite{Keller2026} which shows NP-completeness of deciding whether Clifford circuits can be implemented below a given depth for a given gate set. The work at hand is concerned with cost functions that involve gate count, and not depth, and is therefore complementary to the work in \cite{Keller2026}.

In Section~\ref{sec:background}, we introduce the necessary background on Clifford synthesis and the symplectic representation of Clifford operators.

In Sections~\ref{sec:bounding-recorder-counts} and \ref{sec:data-recorder-variables}, we provide a characterization of CNOT gates in Clifford circuits into separate types, based on their action on the symplectic vector space. In a CNOT-only circuit, all gates correspond to gates of one type (called $T$ below), and we introduce a series of bounds that ultimately allow us to ensure that any circuit implementing within the specific CNOT budget the Clifford synthesis instance defined by a hypercube Hamiltonian path instance as above uses CNOT gates only of that one type.

In Section~\ref{sec:reduction}, we describe and prove the reduction from grid Hamiltonian path to Clifford synthesis. Readers interested primarily in the main ideas of the reduction may wish to start there.

In Section~\ref{sec:arbitrary-two-qubit-clifford}, we provide generalizations to other cost models of practical relevance. We show that NP-completeness extends to the case where the resource cost includes arbitrary two-qubit Clifford gates, rather than just CNOT gates, and to the case where single-qubit Clifford gates are also counted with a given fixed cost per gate. And we show that NP-completeness still holds in the model where SWAP gates are considered free; a cost model that is commonly used when optimizing quantum circuits for architectures with complete qubit connectivity.

\begin{figure}[p]
  \centering
  \resizebox{\textwidth}{!}{
\begin{tikzpicture}[
  x=1cm,y=1cm,
  line cap=round,line join=round,
  font=\small,
  overview vertex/.style={circle,fill=darkblue,draw=darkblue,inner sep=0pt,minimum size=4pt},
  overview unused/.style={circle,fill=black!25,draw=none,inner sep=0pt,minimum size=3pt},
  overview path/.style={draw=darkblue,line width=1.15pt,-{Latex[length=1.7mm,width=1.2mm]}},
  overview wire/.style={draw=black!70,line width=.4pt},
  overview control/.style={circle,fill=darkblue,inner sep=0pt,minimum size=3.5pt},
  overview recording control/.style={circle,fill=darkgreen,inner sep=0pt,minimum size=3.5pt},
  overview target/.style args={#1}{
    circle,draw=#1,fill=white,line width=.65pt,
    inner sep=0pt,outer sep=0pt,minimum size=6pt,
    path picture={
      \draw[#1,line width=.65pt]
        (path picture bounding box.west) -- (path picture bounding box.east)
        (path picture bounding box.south) -- (path picture bounding box.north);
    }
  },
  overview target/.default=darkblue,
  overview recording target/.style={overview target=darkgreen},
  overview title/.style={anchor=west,font=\small\bfseries},
  overview note/.style={font=\scriptsize,text=black!70,align=center}
]
  \node[overview title] at (0,.22) {(a) Grid instance};
  \foreach \xx in {.6,1.8,3.0}
    \draw[black!13,densely dashed] (\xx,-3.25) -- (\xx,-1.55);
  \foreach \yy in {-3.0,-1.8}
    \draw[black!13,densely dashed] (.35,\yy) -- (3.25,\yy);
  \coordinate (g0) at (.6,-3.0);
  \coordinate (g1) at (1.8,-3.0);
  \coordinate (g2) at (1.8,-1.8);
  \coordinate (g3) at (3.0,-1.8);
  \draw[overview path] (g0) -- (g1);
  \draw[overview path] (g1) -- (g2);
  \draw[overview path] (g2) -- (g3);
  \foreach \v in {0,1,2,3} \node[overview vertex] at (g\v) {};
  \draw[darkblue,line width=.55pt] (g0) circle[radius=3.7pt];
  \draw[darkblue,line width=.55pt] (g3) circle[radius=3.7pt];
  \node[font=\scriptsize,anchor=north] at ($(g0)+(0,-.12)$) {$(0,0)$};
  \node[font=\scriptsize,anchor=north] at ($(g1)+(0,-.12)$) {$(1,0)$};
  \node[font=\scriptsize,anchor=south] at ($(g2)+(0,.12)$) {$(1,1)$};
  \node[font=\scriptsize,anchor=south] at ($(g3)+(0,.12)$) {$(2,1)$};
  \node[font=\small,text=darkblue,anchor=east] at ($(g0)+(-.16,0)$) {$s$};
  \node[font=\small,text=darkblue,anchor=west] at ($(g3)+(.16,0)$) {$t$};

  \node[overview title] at (4.15,.22) {(b) Hypercube embedding};
  \coordinate (h000) at (4.75,-3.05);
  \coordinate (h100) at (6.45,-3.05);
  \coordinate (h010) at (5.60,-2.25);
  \coordinate (h110) at (7.30,-2.25);
  \coordinate (h001) at (4.75,-1.40);
  \coordinate (h101) at (6.45,-1.40);
  \coordinate (h011) at (5.60,-.60);
  \coordinate (h111) at (7.30,-.60);
  \foreach \u/\v in {000/100,000/010,000/001,100/110,100/101,010/110,010/011,001/101,001/011,110/111,101/111,011/111}
    \draw[black!23,line width=.45pt] (h\u) -- (h\v);
  \foreach \v in {010,110,001,011} \node[overview unused] at (h\v) {};
  \draw[overview path] (h000) -- (h100);
  \draw[overview path] (h100) -- (h101);
  \draw[overview path] (h101) -- (h111);
  \foreach \v in {000,100,101,111} \node[overview vertex] at (h\v) {};
  \draw[darkblue,line width=.55pt] (h000) circle[radius=3.7pt];
  \draw[darkblue,line width=.55pt] (h111) circle[radius=3.7pt];
  \node[font=\scriptsize,anchor=east,fill=white,inner sep=1pt] at ($(h000)+(-.13,0)$) {$000$};
  \node[font=\scriptsize,anchor=north] at ($(h100)+(0,-.12)$) {$100$};
  \node[font=\scriptsize,anchor=west,fill=white,inner sep=1pt] at ($(h101)+(.20,-.08)$) {$101$};
  \node[font=\scriptsize,anchor=south] at ($(h111)+(0,.12)$) {$111$};
  \node[font=\scriptsize,text=darkblue,anchor=north] at (5.60,-3.13) {$x_1$};
  \node[font=\scriptsize,text=darkblue,anchor=west,fill=white,inner sep=1pt] at (6.55,-2.32) {$x_3$};
  \node[font=\scriptsize,text=darkblue,anchor=south east,fill=white,inner sep=1pt] at (6.77,-.93) {$x_2$};

  \node[overview title] at (8.55,.22) {(c) Circuit from path};
  \foreach \jj/\yy in {1/-.75,2/-1.30,3/-1.85} {
    \draw[overview wire] (9.65,\yy) -- (16.45,\yy);
    \node[anchor=east,font=\scriptsize] at (9.48,\yy) {$x_{\jj}$};
  }
  \draw[overview wire,line width=.6pt] (9.65,-2.70) -- (16.45,-2.70);
  \node[anchor=east,font=\scriptsize] at (9.48,-2.70) {$x_0$};
  \foreach \xx/\yy in {10.60/-.75,11.90/-1.85,13.85/-1.30} {
    \draw[darkblue,line width=.75pt] (\xx,\yy) -- (\xx,-2.70);
    \node[overview control] at (\xx,\yy) {};
    \node[overview target] at (\xx,-2.70) {};
  }
  \node[font=\scriptsize,anchor=north] at (10.10,-2.94) {$x_0$};
  \node[font=\scriptsize,anchor=north] at (11.25,-2.94) {$x_0+x_1$};
  \node[font=\scriptsize,anchor=north] at (12.875,-2.94) {$x_0+x_1+x_3$};
  \node[font=\scriptsize,anchor=north] at (15.15,-2.94) {$x_0+x_1+x_2+x_3$};

  \draw[black!20,line width=.4pt] (0,-4.08) -- (16.55,-4.08);

  \node[overview title] at (0,-4.58) {(d) Full CNOT circuit with recorder qubits};
  \foreach \cc in {3.0,5.9,8.8,11.7}
    \fill[black!2] ({\cc-1.16},-5.48) rectangle ({\cc+1.16},-12.37);
  \node[font=\scriptsize,anchor=south] at (3.0,-5.39) {$f_{000}=x_0$};
  \node[font=\scriptsize,anchor=south] at (5.9,-5.39) {$f_{100}=x_0+x_1$};
  \node[font=\scriptsize,anchor=south] at (8.8,-5.39) {$f_{101}=x_0+x_1+x_3$};
  \node[font=\scriptsize,anchor=south] at (11.7,-5.39) {$f_{111}=x_0+x_1+x_2+x_3$};
  \foreach \jj/\yy in {1/-5.90,2/-6.35,3/-6.80} {
    \draw[overview wire] (1.60,\yy) -- (13.40,\yy);
    \node[anchor=east,font=\scriptsize] at (1.40,\yy) {$x_{\jj}$};
    \node[anchor=west,font=\scriptsize] at (13.62,\yy) {$x_{\jj}$};
  }
  \draw[overview wire,line width=.6pt] (1.60,-7.55) -- (13.40,-7.55);
  \node[anchor=east,font=\scriptsize] at (1.40,-7.55) {$x_0$};
  \node[anchor=west,font=\scriptsize] at (13.62,-7.55) {$f_{111}$};
  \foreach \cc/\bits/\base in {3.0/000/-8.25,5.9/100/-9.35,8.8/101/-10.45,11.7/111/-11.55} {
    \foreach \aa/\offset in {1/0,2/-.20,r/-.65} {
      \pgfmathsetmacro{\yy}{\base+\offset}
      \draw[overview wire] (1.60,\yy) -- (13.40,\yy);
    }
    \draw[black!55,line width=.45pt]
      (1.49,\base) -- (1.37,\base) -- (1.37,{\base-.65}) -- (1.49,{\base-.65});
    \node[anchor=east,font=\scriptsize] at (1.18,{\base-.325}) {$y_{\bits,a}$};
    \foreach \dy in {-.34,-.425,-.51} {
      \fill[black!65] (1.80,{\base+\dy}) circle[radius=.35pt];
      \fill[black!65] (13.20,{\base+\dy}) circle[radius=.35pt];
    }
    \draw[black!55,line width=.45pt]
      (13.51,\base) -- (13.62,\base) -- (13.62,{\base-.65}) -- (13.51,{\base-.65});
    \node[anchor=west,font=\scriptsize] at (13.78,{\base-.325}) {$y_{\bits,a}+f_{\bits}$};
    \foreach \dx/\offset in {-.70/0,-.35/-.20,.70/-.65} {
      \pgfmathsetmacro{\xx}{\cc+\dx}
      \pgfmathsetmacro{\yy}{\base+\offset}
      \draw[darkgreen,line width=.65pt] (\xx,-7.55) -- (\xx,\yy);
      \node[overview recording control] at (\xx,-7.55) {};
      \node[overview recording target] at (\xx,\yy) {};
    }
    \node[font=\scriptsize,fill=black!2,inner sep=1.2pt,text=darkgreen] at ({\cc+.17},-7.55) {$\cdots$};
  }
  \foreach \xx/\yy in {4.45/-5.90,7.35/-6.80,10.25/-6.35} {
    \draw[darkblue,line width=.75pt] (\xx,\yy) -- (\xx,-7.55);
    \node[overview control] at (\xx,\yy) {};
    \node[overview target] at (\xx,-7.55) {};
  }
\end{tikzpicture}}
  \caption{Overview of the reduction on an example grid graph with $m = 4$ vertices and endpoints $s=(0,0)$ and $t=(2,1)$.
    (a) A Hamiltonian path visits $(0,0)$, $(1,0)$, $(1,1)$, and $(2,1)$ in that order.
    (b) The encoding maps this path to $000\to100\to101\to111$ in the three-dimensional hypercube; the other cube vertices and edges are shown in grey.
    (c) Three CNOTs, controlled by coordinate qubits $x_1$, $x_3$, and $x_2$ in that order, update the path qubit; the parities on its successive wire segments identify the Hamiltonian path.
    (d) The full circuit shows the coordinate wires and $r$ recorder qubits for each vertex. Only the first two and last recorder wires and their CNOTs are drawn in each group. Blue CNOTs move the path qubit along the hypercube edges, and green CNOTs add its current parity $f_v$ to each recorder $y_{v,a}$. Here, $m = 4$, so $r = 6(m - 1) + 1 = 19$ and the circuit uses $4r + 3 = 79$ CNOT gates. The main goal of the paper is to show that any other Clifford circuit implementing the same operator as this circuit, using no more than $R+m-1$ CNOT gates must still be such that the Hamiltonian path can be read off from the parities on the path qubit $x_0$.}
  \label{fig:reduction-overview}
\end{figure}
\clearpage

\section{Background and notation}
\label{sec:background}
Let $\bbF_2 = \{0, 1\}$ be the field with two elements. The $N$-qubit Pauli operators are length $N$ tensor products of the Pauli matrices $I$, $X$, $Y$, and $Z$ up to scalar phases. A Clifford operator is a unitary operator whose action by conjugation maps Pauli operators to Pauli operators.  Throughout, Clifford operators are considered equivalent if they differ by an overall scalar phase. This convention leaves their conjugation action on Pauli operators unchanged.

Let $X_i$ and $Z_i$ denote the $2N$ Pauli operators in which $X$ or $Z$ acts on qubit $i$, i.e., the tensor products whose $i$th factor is $X$ or $Z$ and all other factors are $I$. Conjugation by an $N$-qubit Clifford operator maps $X_i$ and $Z_i$ to a signed tensor product of the matrices $I$, $X$, $Y$ and $Z$. Discarding this sign, an $N$-qubit Clifford operator is represented by a symplectic matrix $F \in \Spn$, also often referred to as a symplectic tableau. Recording the signs gives an additional \emph{phase vector} $p \in \bbF_2^{2N}$, and the full signed tableau $(F, p)$ determines the Clifford unitary up to overall scalar phase. The phase vector can be modified arbitrarily without changing $F$ using only single-qubit Clifford gates, so when such gates are considered free, we may ignore the phase vector entirely and treat a Clifford synthesis instance as an element $F \in \Spn$. For the more general results in Section~\ref{sec:arbitrary-two-qubit-clifford}, we allow single-qubit Clifford gates to have positive cost but in this case, we find that our lower-bound arguments nevertheless only depend on the symplectic part.

The vector space $\bbF_2^{2N}$ on which the operators act will be indexed by $X_1, \dots, X_N, Z_1, \dots, Z_N$, and the symplectic form $\omega$ is defined such that $\omega(X_i, Z_j) = \delta_{ij}$ and $\omega(X_i, X_j) = \omega(Z_i, Z_j) = 0$ for all $i, j$.

For a given symplectic matrix $F \in \Spn$, we will write
\[
  a_i = F_{X_i}, \quad b_i = F_{Z_i}
\]
for the rows belonging to qubit $i$. Appending a gate $G$ to a circuit $U$ corresponds to the left-action, so the corresponding matrices transform as $F_{GU} = F_G F_U$. Thus, for instance, a CNOT gate with control $i$ and target $j$ acts by
\[
 a_j\leftarrow a_j+a_i,\qquad b_i\leftarrow b_i+b_j.
\]
In particular, a CNOT-only circuit implementing a linear transformation $A \in \GLn$ has symplectic matrix
\[
  \rho(A) = \begin{pmatrix} A & 0 \\ 0 & (A^{-1})^\top \end{pmatrix}.
\]
For a given $F \in \Spn$ and a given qubit $i$, we write
\[
  P_i = \mathrm{span} \{a_i, b_i\}.
\]
These planes are nondegenerate and pairwise symplectically orthogonal, and their direct sum is the full space $\bbF_2^{2N}$. A single-qubit Clifford gate changes the basis $a_i, b_i$ of $P_i$ but preserves $P_i$ itself; it will be convenient to work with the planes $P_i$ rather than the individual basis vectors. Here, a subspace $W \subseteq \bbF_2^{2N}$ is \emph{nondegenerate} if $W \cap W^\perp = \{0\}$, where
\[
  W^\perp = \{ v \in \bbF_2^{2N} \mid \omega(v, w) = 0 \text{ for all } w \in W \}.
\]
A subspace $W$ is \emph{isotropic} if $W \subseteq W^\perp$ and \emph{Lagrangian} if $W = W^\perp$. Recall also that an isotropic and a Lagrangian subspace have dimensions at most $N$ and exactly $N$, respectively.

Much of our work amounts to tracking how the symplectically orthogonal planes $P_i = \mathrm{span} \{a_i, b_i\}$ intersect with the Lagrangian subspace $W$, and how these intersections evolve under the action of Clifford gates. That is, for a given circuit, we really have a sequence of tableaux, thus also a sequence of planes $P_i$, but to avoid cluttering the notation, we will avoid explicitly indexing the sequence.

One example of a Lagrangian subspace is the subspace spanned by the first $N$ (or last $N$) standard basis vectors of $\bbF_2^{2N}$, but in what follows, many of our results will be independent of the specific choice of $W$.

\section{Characterizing CNOT gate actions}
\label{sec:bounding-recorder-counts}
Let $(\bbF_2^{2N}, \omega)$ be the symplectic vector space defined above, and let $W$ denote a general Lagrangian subspace of $\bbF_2^{2N}$. Let $F \in \Spn$ be a symplectic tableau, and let $a_i = F_{X_i}$ and $b_i = F_{Z_i}$ for $i = 1, \dots, N$ denote its rows, and let $P_i = \mathrm{span} \{a_i, b_i\}$.

Note immediately that since any single-qubit Clifford gate leaves all $P_i$ invariant, it in particular does not change $W \cap P_i$. 
\begin{defn}
  We say that qubit $i$ \emph{represents} a vector $w \in W \setminus \{0\}$ if $w \in P_i$. An \emph{appearance} of such a vector on qubit $i$ is an occurrence in which the qubit represents the vector immediately after a CNOT gate when the same vector was not represented by qubit $i$ immediately before the gate.
\end{defn}
Since $W$ is isotropic and $P_i$ is nondegenerate, $\dim(W \cap P_i) \leq 1$ for all $i$, and thus each qubit can represent at most one nonzero vector from $W$ at any given time.

We will group the action of CNOT gates into six distinct types based on how they affect the intersections $W \cap P_i$. For a CNOT with control and target $i$ and $j$ at some point in the circuit, let $v_-, v_+ \in \{0, 1, 2\}$ count how many of these two qubits represent a nonzero vector of $W$ immediately before and immediately after the gate, respectively, and let $k \leq v_+$ be the number of appearances caused by the gate. The six types are then characterized by the values of the triple $(v_-, v_+, k)$, and we name them as follows:
\begin{itemize}
  \item \makebox[8em][l]{Type $T$ (top):}
    $(v_-, v_+, k) = (2, 2, 1)$.
  \item \makebox[8em][l]{Type $D$ (down):}
    $(v_-, v_+, k) = (2, 0, 0)$.
  \item \makebox[8em][l]{Type $U$ (up):}
    $(v_-, v_+, k) = (0, 2, 2)$.
  \item \makebox[8em][l]{Type $P$ (plus):}
    $(v_-, v_+, k) = (0, 1, 1)$.
  \item \makebox[8em][l]{Type $M$ (minus):}
    $(v_-, v_+, k) = (1, 0, 0)$.
  \item \makebox[8em][l]{Type $Z$ (zero):}
    $(v_-, v_+, k) \in \{(0, 0, 0), (1, 1, 0), (2, 2, 0)\}$.
\end{itemize}
\begin{lem}
  \label{lem:six-types}
  The six types of CNOT gate actions described above exhaust all possible changes to the intersections $W \cap P_i$. For type $T$, the represented vectors change from $(u, v)$ to either $(u, u+v)$ or $(u + v, v)$.
\end{lem}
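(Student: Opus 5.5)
The plan is a direct local computation on the two affected planes. A CNOT with control $i$ and target $j$ acts only on $P_i$ and $P_j$; every other $P_l$ is unchanged. It also preserves $Q = P_i \oplus P_j$, a nondegenerate $4$-dimensional subspace. I will set $V = W \cap Q$.

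\textbf{The case count.} We have $W\cap P_i, W\cap P_j \subseteq V$, and $V$ is isotropic in a $4$-dimensional symplectic space, so $\dim V \le 2$. Moreover $V$ does not change under the gate. Before the gate, the represented vectors span a subspace of $V$ of dimension $v_-$ (at most one vector per plane, and $P_i\cap P_j=0$). After the gate the same holds with $v_+$. I will split by $\dim V$.
\begin{itemize}
\item If $\dim V \le 1$, then at most one of $P_i, P_j$ can meet $W$ at a time. Otherwise two independent vectors would lie in $V$. Hence $v_\pm\in\{0,1\}$, and when $v_-=v_+=1$ the represented vector is the unique nonzero $w\in V$.
\item If $\dim V = 2$, I will show $v_\pm \ne 1$ via the projections. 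Suppose $P_i \cap W = \langle u\rangle$. Let $\pi_j$ be the projection of $V$ onto $P_j$ along $P_i$. Its kernel is $V\cap P_i=\langle u\rangle$, so its image is a $1$-dimensional subspace $\langle x\rangle \subseteq P_j$. For any $v \in V$ we have $v - \pi_j(v) \in P_i$, and $\omega(v-\pi_j v, u)=\omega(v,u)=0$. Since $u^\perp\cap P_i=\langle u\rangle$ in the nondegenerate plane $P_i$, this forces $v-\pi_j v\in\langle u\rangle$. Hence $x\in V$, so $x \in W\cap P_j$ and $v=2$. Thus when $\dim V=2$, we have $v_\pm\in\{0,2\}$.
\end{itemize}

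\textbf{Assigning types.} The possible pairs are $(0,0),(0,1),(1,0),(1,1)$ when $\dim V\le1$, and $(0,0),(0,2),(2,0),(2,2)$ when $\dim V=2$. In each case $k$ is determined as follows.
\begin{itemize}
\item For $(0,1)$, the new vector is an appearance, so $k=1$. For $(0,2)$, $k=2$.
\item For $(1,0)$, $(2,0)$ and $(0,0)$, $k=0$ trivially.
\item For $(1,1)$, the unique vector of $V$ is represented both before and after. I must check that it stays on the same qubit, so that $k=0$ is consistent with type $Z$. In fact the claim only needs $k\le1$, and $(1,1,1)$ does not appear in the list.
\end{itemize}

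\textbf{Main obstacle: the $(1,1)$ case.} I will argue directly. Say $u\in P_i\cap W$ before the gate and $u \in P_j$ after. A CNOT maps $P_i$ to $\langle a_i, b_i+b_j\rangle$ and $P_j$ to $\langle a_j+a_i, b_j\rangle$ (in the updated tableau). In original coordinates, the new plane for $j$ is $\langle a_i+a_j, b_j\rangle$. This plane meets $P_i$ only in $0$, since it lies in $Q$ and its elements project nontrivially onto $P_j$. So $u$ cannot move, contradiction. The same check with the roles of control and target exchanged gives $k=0$.

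\textbf{Pairs $(2,2)$.} The pair $(u,v)$ before the gate becomes a pair of represented vectors after, and both pairs are bases of $V=\{0,u,v,u+v\}$. The new plane for $i$ is $\langle a_i,b_i+b_j\rangle$. It contains $a_i$, and it contains $u$ exactly when $u\in\langle a_i\rangle$, since any element with a $b$-component changes. Case-checking the possible positions of $u\in P_i$ (namely $a_i$, $b_i$ or $a_i+b_i$) and of $v\in P_j$ shows one of two outcomes. Either both are preserved, giving $k=0$ and type $Z$. Or exactly one plane acquires $u+v$ while the other keeps its vector, giving $k=1$, outcome $(u,u+v)$ or $(u+v,v)$, and type $T$. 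The remaining combination $k=2$ would require both planes to change. That is impossible: when $u\notin\langle a_i\rangle$, the computation shows the new $P_j=\langle a_i+a_j,b_j\rangle$ still contains $v$ unless $v\notin\langle b_j\rangle$, and compatibility $\omega(u,v)=0$ rules out both changing simultaneously. This bookkeeping with the explicit generators is routine but is where care is needed.
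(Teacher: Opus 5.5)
Your route is the paper's: restrict to the invariant space $P_i\oplus P_j$, split by $\dim(W\cap(P_i\oplus P_j))$, and show $v_\pm\in\{0,2\}$ in the two-dimensional case. Your projection argument is the paper's decomposition $w=p+q$ in different notation. You then exclude $(1,1,1)$ because the new plane of each qubit meets the old plane of the other trivially, and settle the $(2,2)$ case by comparing coefficients. Those parts are fine. One small point: your remark that the $(1,1)$ case ``only needs $k\le 1$'' is backwards, since $(1,1,1)$ is not among the types and $k=0$ is exactly what must be shown. Your next paragraph does show it.

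The step that fails as written is the exclusion of $(2,2,2)$, which is the point the paper singles out. You attribute it to ``compatibility $\omega(u,v)=0$''. But $\omega(u,v)=0$ holds for every $u\in P_i$, $v\in P_j$, because the planes are symplectically orthogonal, so it rules nothing out. The clause before it only restates that $v\in P_j'$ iff $v=b_j$, which does not depend on $u$. The correct reason is the coefficient comparison:
\begin{itemize}
\item Every nonzero element of $P_i'=\langle a_i,b_i+b_j\rangle$ has $P_j$-component $b_j$ exactly when its $P_i$-component is not $a_i$, and $0$ otherwise. So if $u\neq a_i$, the only vector of $V=\{0,u,v,u+v\}$ that can lie in $P_i'$ is $u+v$, and it does iff $v=b_j$.
\item Symmetrically, $P_j'=\langle a_i+a_j,b_j\rangle$. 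If $v\neq b_j$, then $P_j'$ can only contain $u+v$, and it does iff $u=a_i$.
\item Both planes changing their vectors means $u\neq a_i$ and $v\neq b_j$. Then neither $P_i'$ nor $P_j'$ meets $V$, so $v_+=0$ (type $D$), contradicting $v_+=2$.
\end{itemize}
Write out this four-way split ($u=a_i$ or not, $v=b_j$ or not, giving $Z,T,T,D$) explicitly, as the paper does. Your ``case-checking'' claim and the description of type $T$ as $(u,u+v)$ or $(u+v,v)$ then follow, and the proof is complete.
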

\begin{proof}
  Suppose that the CNOT gate acts on qubits $i$ and $j$, and write the planes before the application of the gate as $P_i = \Span \{ a, b \}$ and $P_j = \Span \{ c, d \}$. After application of the gate, these planes become
  \[
    P_i' = \Span \{ a, b + d \}, \quad P_j' = \Span \{ c + a, d \}.
  \]
  The four-dimensional space $S = \Span \{ a, b, c, d \}$ is left unchanged. Let $K = W \cap S$. Recall that we have $\omega(a, b) = \omega(c, d) = 1$ and that the symplectic form $\omega$ vanishes on pairs of vectors from different qubits, so $\{a, b, c, d\}$ is a symplectic basis of $S$, and the form $\omega$ restricted to $S$ is nondegenerate. Since $W$ is Lagrangian, $K$ is isotropic and therefore has dimension at most $2$.

  If $\dim K = 0$, none of the vectors $a, b, c, d$ belong to $W$, and therefore the intersections $W \cap P_i$ and $W \cap P_j$ are trivial both before and after the application of the gate. Thus the gate has type $Z$.
  
  Suppose that $\dim K = 1$, and let $w \in K$ be its nonzero vector. At most one qubit can represent $w$ both before and after the gate, so $v_-, v_+, k \in \{0, 1\}$. Moreover, it is impossible for $w$ to be represented by one qubit before the gate, and the other qubit after, as
  \[
    \Span \{ a, b \}  \cap \Span \{c + a, d \} = 0, \quad \Span \{ c, d \} \cap \Span \{a, b + d \} = 0,
  \]
  which rules out the case $(v_-, v_+, k) = (1, 1, 1)$. Thus, the CNOT is of type $P$, $M$, or $Z$.

  Now suppose that $\dim K = 2$. First, note that $v_-, v_+ \in \{0, 2\}$, i.e. that if one qubit represents a vector, then the other does too, and this is the case both before and after the gate. To see this, suppose one qubit represents a vector $u$. Since its plane lies in $S$, we have $u \in W \cap S = K$. Choose $w \in K$ independent of $u$ and decompose $w = p + q$ over the two qubit planes where $p$ lies in the plane containing $u$. Since $W$ is isotropic and the two qubit planes are symplectically orthogonal,
  \[
    0 = \omega(w, u) = \omega(p, u) + \omega(q, u) = \omega(p, u),
  \]
  so $p$ is a multiple of $u$, and therefore $q = w - p \in K \setminus \{0\}$ is represented by the other qubit.

  If both qubits represent vectors before the gate, write those as $u \in \{a, b, a+b\}$ and $v \in \{c, d, c+d\}$. Then $K = \Span \{u, v\}$. To find the type of the CNOT gate, it suffices to check which of $u, v, u+v$ belong to $\Span \{a, b+d \}$ and $\Span \{c + a, d\}$. Comparing coefficients in $a, b, c, d$ gives the following four cases:
  \begin{itemize}
    \item If $u = a$ and $v = d$, both vectors remain represented by their respective qubits.
    \item If $u = a$ and $v \not=d$, the represented vectors after the gate are $u$ and $v + u$, and $v + u$ is an appearance.
    \item If $u \not= a$ and $v = d$, the represented vectors are $u+v$ and $v$.
    \item If $u \not= a$ and $v \not= d$, neither of the two qubits represents a vector after the gate.
  \end{itemize}
  The four cases correspond to gates of type $Z, T, T, D$ respectively. In particular, this rules out the case $(v_-, v_+, k) = (2, 2, 2)$, showing that the six types cover all possibilities.
\end{proof}
Next, we will bound the total number of appearances that can occur in a given circuit in terms of the number of gates of the above types, the goal being to show that for the circuits relevant for the hardness construction, only gates of type $T$ can occur.

Consider a given Clifford circuit consisting of only single-qubit Clifford gates and $L$ CNOT gates. Each CNOT gate can cause up to $2$ appearances. Let $B$ denote the sum of the number of appearances over all CNOT gates.

In a slight abuse of notation, let $T, D, U, P, M, Z$ denote the number of CNOT gates of the corresponding types as defined above.
\begin{lem}
  \label{lem:first-bounds}
  Suppose that the number of qubits representing a vector of $W$ is the same initially and after the circuit. Then
  \[
    B \leq L, \quad L - B = Z + \frac{P + M}{2}, \quad P + M + Z \leq 2(L - B).
  \]
\end{lem}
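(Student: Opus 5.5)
Since single-qubit gates leave every $W \cap P_i$ unchanged, the number $v$ of qubits representing a nonzero vector of $W$ changes only at CNOT gates. Using the classification of Lemma~\ref{lem:six-types}, I will do a simple double count: once of appearances, once of the net change in $v$. Throughout, $L = T + D + U + P + M + Z$, and Lemma~\ref{lem:six-types} guarantees that every CNOT falls into exactly one of the six types.

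\emph{Counting appearances.} Reading $k$ off the defining triples gives
\[
  B = T + 2U + P,
\]
since types $D$, $M$, $Z$ contribute $k=0$, types $T$ and $P$ contribute $k=1$, and type $U$ contributes $k=2$.

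\emph{Counting the change in $v$.} The change $v_+ - v_-$ per gate is $0$ for $T$ and $Z$, $-2$ for $D$, $+2$ for $U$, $+1$ for $P$, and $-1$ for $M$. By hypothesis the total change over the circuit is zero, so
\[
  2U + P = 2D + M.
\]
This is the only place the hypothesis is used.

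\emph{Combining.} We have
\[
  L - B = D + Z + M - U.
\]
Substituting $D - U = (P - M)/2$ from the balance identity gives
\[
  L - B = Z + \frac{P - M}{2} + M = Z + \frac{P+M}{2},
\]
which is the middle claim. This quantity is a sum of nonnegative counts, so $B \le L$. Finally,
\[
  2(L - B) = 2Z + P + M \ge Z + P + M,
\]
which gives the third inequality.

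None of these steps should pose a real obstacle. The only point needing care is checking that $\Delta v$ and $k$ are correctly tabulated for the three sub-cases of type $Z$ (all have $\Delta v = 0$ and $k = 0$). It is also worth noting that Lemma~\ref{lem:six-types} excludes triples such as $(1,1,1)$ and $(2,2,2)$, so no other contributions need to be accounted for.
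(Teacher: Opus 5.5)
Your proof is correct and follows the paper's argument essentially verbatim. Like the paper, you count appearances as $B = T + 2U + P$, use the hypothesis to get the balance identity $2U + P = 2D + M$, and substitute into $L - B = D - U + M + Z$; both inequalities then follow immediately.
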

\begin{proof}
  By Lemma~\ref{lem:six-types},
  \[
    B = T + 2U + P, \quad L = T + D + U + P + M + Z.
  \]
  As the number of qubits representing a vector of $W$ is assumed to be unchanged by the circuit,
  \[
    2U + P = 2D + M
  \]
  and therefore
  \[
    L - B = D - U + M + Z = Z + \frac{P+M}{2}
  \]
  and
  \[
    \frac{P + M + Z}{2} \leq L - B. \qedhere
  \]
\end{proof}
Next, we will bound the number of appearances that can occur outside type $T$ and type $U$ gates. First, recall a general fact about symplectic vector spaces.
\begin{lem}
  Let $S$ be a nondegenerate symplectic subspace. If $W \cap S$ is Lagrangian in $S$, then
  \[
    W = (W \cap S) \oplus (W \cap S^\perp).
  \]
\end{lem}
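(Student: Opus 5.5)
The plan is to prove the two inclusions and then check that the sum is direct. Directness is immediate: since $S$ is nondegenerate, $S \cap S^\perp = \{0\}$, so $(W \cap S) \cap (W \cap S^\perp) = \{0\}$. The inclusion $(W\cap S) + (W \cap S^\perp) \subseteq W$ is trivial. All the work is in the reverse inclusion, namely that every $w \in W$ splits as an element of $W \cap S$ plus an element of $W \cap S^\perp$.

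Because $S$ is nondegenerate, we have $\bbF_2^{2N} = S \oplus S^\perp$. Given $w \in W$, write uniquely $w = s + t$ with $s \in S$ and $t \in S^\perp$; it suffices to show $s \in W$, since then $t = w - s \in W$ as well. Take any $u \in W \cap S$. Since $W$ is isotropic, $\omega(w,u) = 0$. Since $t \in S^\perp$ and $u \in S$, we have $\omega(t,u) = 0$. Hence $\omega(s,u) = 0$ for every $u \in W \cap S$, that is,
\[
  s \in (W \cap S)^{\perp_S} = \{ x \in S \mid \omega(x,u) = 0 \text{ for all } u \in W\cap S\}.
\]
By hypothesis $W \cap S$ is Lagrangian in $S$, meaning it equals its own orthogonal complement inside $S$, so $s \in W \cap S \subseteq W$. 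This completes the argument.

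The only step needing care is interpreting ``Lagrangian in $S$'' as the statement that the orthogonal complement of $W\cap S$ taken within $S$ equals $W \cap S$. Equivalently, one can phrase it via dimension: $\dim(W\cap S) = \tfrac12 \dim S$, and nondegeneracy of $\omega|_S$ gives $\dim (W\cap S)^{\perp_S} = \dim S - \dim(W\cap S)$, so the isotropic subspace $W\cap S$ coincides with its complement in $S$. With that in hand there is no real obstacle. As a remark, the same decomposition combined with $\dim W = N$ also shows that $W \cap S^\perp$ is Lagrangian in $S^\perp$, which is presumably how the lemma will be used later.
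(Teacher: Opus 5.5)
Your proof is correct and complete. Nondegeneracy of $S$ gives both $\bbF_2^{2N} = S \oplus S^\perp$ and directness of the sum. The key step is also right: the $S$-component $s$ of $w \in W$ is orthogonal to $W \cap S$, so it lies in $(W\cap S)^{\perp_S} = W \cap S$, and hence both components lie in $W$.

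There is no competing argument to compare with, because the paper gives no proof of this lemma; it is only recalled as a general fact about symplectic spaces. Your projection argument is the standard proof and fills that in.

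Two small observations:
\begin{itemize}
\item Your argument only uses that $W$ is isotropic, not that it is Lagrangian.
\item The form you actually prove, that the $S$- and $S^\perp$-components of any $w \in W$ each lie in $W$, is exactly how the decomposition is used later in the proof of Lemma~\ref{lem:easy-gates}.
\end{itemize}
Your closing remark that $W \cap S^\perp$ is Lagrangian in $S^\perp$ is also correct, though not needed.
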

Suppose a type $D$ gate acts on qubits $i, j$, which represent vectors $u$ and $v$ before the gate. As in the proof of Lemma~\ref{lem:six-types}, the space $S = P_i \oplus P_j$ is unchanged by the gate. The vectors $u$ and $v$ are independent, so
\begin{align}
\label{eq:WcapS}
  W \cap S = \Span \{u, v\}, \quad W = (W \cap S) \oplus (W \cap S^\perp).
\end{align}
After the gate, neither of the two qubits represents a vector in $W$ until the next CNOT gate that involves either $i$ or $j$ and the decomposition above remains valid. In the following, we track the history of such unrepresented qubits.

To track appearances coming from different types of gates, maintain a list of pairs of qubits, and a set $A$. Each pair consists of two qubits that represent no vector and have not participated in a CNOT since the type $D$ gate creating the pair, while the set $A$ contains all other qubits that represent no vector.

Concretely, assume that every qubit initially represents a vector of $W$ (which is the case, for instance, if $W = \Span \{x_1, \dots, x_N\}$). Start with an empty list of pairs and an empty set $A$. Then, update them as follows:
\begin{itemize}
  \item At a type $D$ gate, add its two qubits as a pair.
  \item At a gate of type $P$, $M$, or $Z$, remove every listed pair containing a qubit involved in the gate. Remove from $A$ any participating qubit that represents a vector after the gate. For each participating qubit, and each partner from a removed pair, add it to $A$ if it represents no vector after the gate.
  \item At a type $U$ gate, remove the two participating qubits from the list of pairs, or from $A$. The following lemma shows that they are either one listed pair or both in $A$. Let $U_1$ and $U_2$ be the two subcases corresponding to these possibilities.
  \item Type $T$ gates do not affect the list of pairs or the set $A$.
\end{itemize}
\begin{lem}
  \label{lem:easy-gates}
  Suppose that every qubit initially represents a vector of $W$. Every type $U$ gate acts either on one listed pair (type $U_1$) or on two members of $A$ (type $U_2$), and every type $P$ gate acts on two members of $A$.

  Writing $U_2$ also for the number of gates of the corresponding type, the number of appearances at gates other than those of types $T$ and $U_1$ is
  \[
    P + 2U_2 \leq 3M + 4Z.
  \]
  If every qubit also finally represents a vector of $W$, then
  \[
    P + 2U_2 \leq 3M + 4Z \leq 6(L-B).
  \]
\end{lem}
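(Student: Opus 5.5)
We argue by maintaining an invariant on the bookkeeping data (the list of pairs and the set $A$) along the circuit, and then use a potential-function count. The invariant has two parts: (i) every qubit representing no vector is either in exactly one listed pair or in $A$, and listed pairs are disjoint from $A$; (ii) for every listed pair $\{i,j\}$ with $S=P_i\oplus P_j$, the decomposition \eqref{eq:WcapS} holds. In particular $W\cap S$ is Lagrangian in $S$, yet neither $P_i$ nor $P_j$ meets $W$.

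Part (i) is immediate from the update rules, since representation status only changes at CNOTs and Lemma~\ref{lem:six-types} lists how. For part (ii), note that single-qubit gates and CNOTs on other qubits preserve $S$ and $S^\perp$ as subspaces. Hence $W\cap S$ stays Lagrangian in $S$ until a gate touches $i$ or $j$, at which point the pair is removed.

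The structural claims then follow by a case analysis.

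Consider a type $U$ or $P$ gate on qubits $k,l$; neither represents a vector before the gate. Suppose $k$ is in a listed pair $\{k,m\}$ with $m\ne l$.
\begin{itemize}
\item Because $P_k\subseteq S=P_k\oplus P_m$ and $W\cap S$ is Lagrangian in $S$, we get $W\cap (P_k\oplus P_l)\subseteq (W\cap S)\oplus(W\cap S^\perp)$ restricted appropriately.
\item Any $w\in W$ in the new plane of $k$ (which lies in $P_k\oplus P_l$) decomposes as $s+s'$ with $s\in W\cap S$ and $s'\in W\cap S^\perp$.
\item Projecting onto $P_m$ forces the $P_m$-component of $s$ to vanish. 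Then $s\in W\cap P_k=0$, so $w\in P_l\cap S^\perp$-ish, and hence $w$ lies in $P_l$ or is $0$.
\item Combining with $\dim(W\cap S)=2$ and the isotropy argument from Lemma~\ref{lem:six-types}, one shows that $k$ cannot acquire a vector.
\end{itemize}
So $U$ and $P$ gates can only act on qubits each of which is either in $A$ or paired with the other. A type $P$ gate creates exactly one appearance, so its qubits cannot be a listed pair: for a pair, $\dim(W\cap S)=2$ forces $v_+\in\{0,2\}$. A mixed situation (one qubit paired, one in $A$) is ruled out by the projection argument above. I expect this projection argument to be the main obstacle, and I would write it carefully using the symplectic basis $\{a,b,c,d\}$ as in Lemma~\ref{lem:six-types}.

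For the counting, I would use the potential $\Phi=|A|$. Elements of $A$ are created only at $P$, $M$, $Z$ gates. An $M$ gate adds at most $1$ participating qubit plus at most $2$ partners from broken pairs, so at most $3$. A $Z$ gate adds at most $2$ participants plus $2$ partners, so at most $4$. Elements of $A$ are consumed two at a time by $U_2$ and by $P$ gates, with $P$ gates consuming at least $1$. Since $\Phi\ge 0$ and starts at $0$,
\[
P + 2U_2 \le \#\{\text{removals from }A\}\le \#\{\text{insertions into }A\} \le 3M+4Z.
\]
If the counting needs refinement, I would charge each $P$ gate to one removal and each $U_2$ gate to two.

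If every qubit finally represents a vector, then the hypothesis of Lemma~\ref{lem:first-bounds} holds. That lemma gives $M+Z\le P+M+Z\le 2(L-B)$, and therefore
\[
3M+4Z\le 4(M+Z)\le 8(L-B).
\]
To sharpen this to $6(L-B)$, use the identity $L-B=Z+\tfrac{P+M}{2}$ instead. It gives $3M+4Z\le 6\bigl(Z+\tfrac M2\bigr)\le 6(L-B)$, since $4Z\le 6Z$ and $3M=6\cdot\tfrac M2$.
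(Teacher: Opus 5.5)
Your proposal is correct and follows the paper's proof essentially step for step. Your projection argument is the paper's key claim that $W\cap(P_k\oplus P_l)=0$ whenever $k$ lies in a listed pair $\{k,m\}$ and $l\neq m$ is unrepresented; after that, you obtain $P+2U_2\le 3M+4Z$ from $|A|\ge 0$ and $3M+4Z\le 6(L-B)$ from $L-B=Z+\tfrac{P+M}{2}$, exactly as the paper does. Two small cleanups: the projection step closes directly, because the components of $w$ lie in $W\cap P_k=0$ and $W\cap P_l=0$, so it covers both qubits at once without any appeal to $\dim(W\cap S)=2$ or isotropy; and at a type $M$ gate only the previously unrepresented participant can be in a listed pair, so the three insertions are the two participants plus at most one former partner (the total of three you state is right).
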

\begin{proof}
  Suppose that $\{i, j\}$ is a listed pair and that $k \not= j$ is another qubit that represents no vector. We claim that
  \[
    W \cap (P_i \oplus P_k) = 0.
  \]
  Indeed, let $w \in W \cap (P_i \oplus P_k)$ and write $w = w_i + w_k$ with $w_i \in P_i$, $w_k \in P_k$. By~\eqref{eq:WcapS}, $w_i \in W$ and $w_k \in W$, but $W \cap P_i = W \cap P_k = 0$ since the qubits represent no vectors so $w = 0$. The space $P_i \oplus P_k$ is preserved by a CNOT acting on $i$ and $k$; if $P_i'$ and $P_k'$ are the planes after such a gate, we therefore also have $W \cap P_i' = \{0\}$ and $W \cap P_k' = \{0\}$, so the qubits continue to represent no vectors. In particular, such a CNOT gate must have type $Z$.

  With this, we can track the behavior of listed pairs and the set $A$ through the circuit:
  \begin{itemize}
  \item A type $U$ gate involving a member of a listed pair must also involve its partner in that pair.
  
  \item A type $P$ gate cannot involve both members of a listed pair at all since, as we have seen in the proof of Lemma~\ref{lem:six-types}, when $\dim (W \cap (P_i \oplus P_j)) = 2$, if one qubit represents a vector, then the other does too. Therefore, a type $P$ gate can only act by removing a single qubit from $A$.

  \item A type $T$ or type $D$ gate cannot involve a listed qubit since both participating qubits represent vectors before such a gate.

  \item A type $Z$ gate can add at most four qubits to the set $A$: its control and target as well as possibly one partner from a listed pair for each.

  \item A type $M$ gate can add at most three qubits to the set $A$: its two participating qubits and, if the previously unrepresented participant belonged to a listed pair, its former partner as well.
  \end{itemize}

  Let $I_A$ be the total number of additions to the set $A$ throughout the circuit, allowing multiple additions of the same qubit, let $U_2$ count the number of type $U$ gates acting on two members of $A$, and let $R_A$ count all other removals. Then
  \[
    \lvert A \rvert = I_A - 2U_2 - R_A \geq 0
  \]
  and
  \[
    2U_2 + P \leq 2U_2 + R_A \leq I_A  \leq 4Z + 3M.
  \]
  If, moreover, every qubit also finally represents a vector of $W$, then by Lemma~\ref{lem:first-bounds}, we have
  \[
    4Z + 3M \leq 3P+3M+6Z \leq 6(L-B). \qedhere
  \]
\end{proof}

\section{Data and recorder variables}
\label{sec:data-recorder-variables}
Suppose now that the $N$ qubits are split into two sets: $d + 1 \geq 1$ ``data'' qubits, followed by $R \geq 0$ ``recorder'' qubits. Denote the first $N$ rows of the initial tableau by $x_0, \dots, x_d, y_1, \dots, y_R$. These are the first $N$ standard basis vectors in $\bbF_2^{2N}$. We call $x_0, \dots, x_d$ \emph{data variables} and $y_1, \dots, y_R$ \emph{recorder variables}, and we say that a \emph{data vector} is an element of $\Span \{x_0, \dots, x_d\}$. Let
\[
  W = \Span \{x_0, \dots, x_d, y_1, \dots, y_R\}
\]
be the Lagrangian subspace associated with these variables.

Let $f_r \in \Span \{x_0, \dots, x_d\} \setminus \{0\}$, $r = 1, \dots, R$, be nonzero data vectors. Assume that all final qubits represent vectors of $W$ and that, moreover, recorder qubit $r$, which initially represents $y_r$, is required to represent $q_r = y_r + f_r$ at the end of a given circuit. Note that the $q_r$ are pairwise distinct, and they are neither initial basis vectors nor data vectors.

Recorder qubit $r$ does not initially represent $q_r$, so there must be one or more CNOT gates that cause it to appear. Let the \emph{marked appearance} be the final such appearance.

There are $R$ marked appearances, one for each recorder qubit. Let
\[
  E = B - R
\]
be the number of appearances that are not marked.

Next, we will bound how many outputs of the form $q_r = y_r + f$ can possibly be realized by the recorder qubits if $f$ itself is never represented on any qubit.
\begin{lem}
  \label{lem:recorder-count}
  Let $f$ be a nonzero data vector. If no qubit ever represents $f$, then the number of recorder qubits with $f_r = f$ satisfies
  \[
    \lvert \{ r \mid f_r = f \}\rvert \leq 2E + 3M + 4Z \leq 6(L-R).
  \]
\end{lem}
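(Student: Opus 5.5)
The plan is to sort the marked appearances of the vectors $q_r = y_r + f$ (over those $r$ with $f_r = f$) by the type of the CNOT gate that causes them. Appearances at gates of type $P$ or $U_2$ are already controlled by Lemma~\ref{lem:easy-gates}: there are at most $P + 2U_2 \leq 3M + 4Z$ of them. Every qubit initially represents a vector of $W$, since $W$ is spanned by the first $N$ standard basis vectors. By assumption every qubit also finally represents one, so Lemma~\ref{lem:easy-gates} applies. By Lemma~\ref{lem:six-types}, the remaining marked appearances occur at gates of type $T$ or $U_1$. For these I will build a charging map to \emph{unmarked} appearances in which each unmarked appearance is charged at most twice. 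This gives the first inequality. The second inequality is then arithmetic, using $E = B - R$ and Lemma~\ref{lem:easy-gates}:
\[
  2E + 3M + 4Z \leq 2(B-R) + 6(L-B) = 6L - 4B - 2R \leq 6(L-R),
\]
where the last step uses $B \geq R$.

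\emph{Type $T$.} By Lemma~\ref{lem:six-types}, the two qubits represent $(u,v)$ before the gate, and afterwards one of them represents $u + v = y_r + f$. Write $u, v$ in the basis $x_0,\dots,x_d,y_1,\dots,y_R$. Exactly one of them, say $u$, has $y_r$ in its support. Then $u \neq y_r$, since otherwise $v = f$ would be represented, contrary to hypothesis. Also $u \neq q_r$, since otherwise $q_r$ would already have been represented and the gate would cause no appearance of it. Hence $u$ is not an initial vector, so it must have appeared on its qubit at some earlier gate. Consider the last such appearance before the $T$ gate. It cannot be a marked appearance of some $q_s$: the only $q_s$ whose support contains $y_r$ is $q_r$ itself. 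So it is unmarked, and I charge $r$ to it. A $T$ gate produces exactly one new vector, so each $T$ gate charges at most one recorder.

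\emph{Type $U_1$.} Here the gate acts on a listed pair $\{i,j\}$ created by an earlier type $D$ gate. Just before that $D$ gate the qubits represented $u', v'$. By~\eqref{eq:WcapS}, $W \cap (P_i \oplus P_j) = \Span\{u',v'\}$, and this space persists unchanged until the $U_1$ gate. So $q_r \in \{u', v', u'+v'\}$. Each of these possibilities is handled by the same support argument as for type $T$: some represented vector in $\{u', v'\}$ contains $y_r$ but is neither $y_r$ nor $q_r$, and otherwise $f$ would have been represented. That vector's last appearance was therefore unmarked, and I charge $r$ to it.

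The main obstacle is the multiplicity of the charging. A single unmarked appearance of a vector $u$ with several $y$'s in its support could, during its lifetime on one qubit, be used by several $T$ gates producing different $q_r$'s. To control this I would try to show that once $u$ contributes to producing $q_r = u+v$ with $f$ unrepresented, the $y$-support of the vectors involved changes in a way that forbids $u$ from being reused more than once more. Concretely, I would track which of the two vectors at each $T$ gate retains $y_r$, and aim to show that each unmarked appearance is charged at most twice, matching the factor $2$ in front of $E$. If this direct argument fails, the fallback is to refine the charging: charge $r$ to the unmarked appearance of the vector whose $y$-support is minimal, and argue by the distinctness of the $q_r$ that two different recorders cannot both claim the same minimal appearance on each side of the gate.
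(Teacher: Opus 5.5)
Your outline is sound: you isolate the $P$ and $U_2$ appearances via Lemma~\ref{lem:easy-gates}, charge the type-$T$ and type-$U_1$ marked appearances to unmarked ones, and the closing arithmetic $2E+3M+4Z\le 6L-4B-2R\le 6(L-R)$ is correct. \textbf{The gap is that the charging you define does not give the factor $2$, and you leave exactly that step open.} Charging a type-$T$ marked appearance of $q_r$ to the last appearance of whichever operand contains $y_r$ can be many-to-one without bound. For example, take $u=y_1+y_2+y_3+g$ with $g$ a data vector, held on one qubit. It can be the control of three type-$T$ gates whose targets are recorder qubits $1,2,3$ holding $u+q_1$, $u+q_2$, $u+q_3$. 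Each gate produces the corresponding $q_r$, $u$ never leaves its qubit, and all three recorders are charged to the same appearance of $u$; the same works with any number of recorders. Your fallback does not help. The operand of minimal $y$-support is the one \emph{without} $y_r$, which can be an initial vector such as a data basis vector $x_k$, so there is nothing to charge. The $U_1$ case also has a hole. There $q_r\in\{u',v'\}$ is possible, for instance $\{u',v'\}=\{y_r,q_r\}$. Then no vector in $\{u',v'\}$ contains $y_r$ while being neither $y_r$ nor $q_r$, and $f$ need not have been represented.

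The missing idea is to charge according to the history of recorder qubit $r$ itself, where its marked appearance occurs. If qubit $r$ had any appearance before its marked one, that earlier appearance is unmarked, because the only marked appearance on qubit $r$ is the final appearance of $q_r$ there. Distinct $r$ give distinct qubits, so this case accounts for at most $E$ recorders. In the example above, each recorder $r$ would be charged to its own earlier appearance of $u+q_r$. Otherwise qubit $r$ has only ever represented $y_r$. Then a type-$T$ marked appearance is impossible, since by Lemma~\ref{lem:six-types} the other operand would have to be $q_r+y_r=f$. At a type-$U_1$ gate, let $j$ be the partner qubit and $b$ the vector it held at the type-$D$ gate that created the pair. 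Then $q_r\in\Span\{y_r,b\}$ forces $b=q_r$, because $b=f$ is excluded. This appearance of $q_r$ on qubit $j\neq r$ is unmarked. These appearances are distinct for distinct $r$ because the $q_r$ are pairwise distinct, which gives the second $E$.
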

\begin{proof}
  By Lemma~\ref{lem:easy-gates}, at most $3M + 4Z$ appearances, and hence at most $3M + 4Z$ marked appearances, can occur outside type $T$ gates and type $U_1$ gates. We will count how many recorder qubits $r$ can have $f_r = f$ and a marked appearance of one of those two types.

  First, consider those recorder qubits $r$ for which there was an earlier appearance of any nonzero vector in $W$ on the same recorder qubit before the marked appearance. Each such earlier appearance is not marked, so there are at most $E$ such recorder qubits.

  For every remaining recorder qubit $r$, the only vector represented on its qubit before the marked appearance is the initial vector $y_r$.
  
  Suppose its marked appearance occurs at a type $T$ gate. Then, by Lemma~\ref{lem:six-types}, the appearance of $q_r = y_r + f$ at this gate requires that $f$ was also represented before the type $T$ gate, contradicting the assumption that $f$ is never represented on any qubit.

  Suppose instead that the marked appearance occurs at a type $U_1$ gate on a listed pair. At the type $D$ gate that created this pair, the represented vectors were $y_r$ on the recorder qubit and some nonzero vector $b$ on the other qubit of the pair. Call this qubit $j$. By~\eqref{eq:WcapS},
  \[
    q_r = y_r + f \in \Span \{y_r, b\}.
  \]
  Since $q_r \not= y_r$, either $b = q_r$ or $b = f$; the latter contradicts the assumption that $f$ is never represented on any qubit, so $b = q_r$. In particular, qubit $j$ represents $q_r$ immediately before the type $D$ gate. Since the $\{q_s\}_{s=1}^R$ are pairwise distinct, the appearance of $q_r$ on qubit $j$ is not a marked appearance. Associating this appearance to qubit $r$, as above, this case contributes at most another $E$ recorder qubits. The associations may overlap but provide an overall bound of $2E$, proving the first inequality.

  Since all qubits represent vectors initially and finally, Lemma~\ref{lem:easy-gates} and the fact that $E = B-R \geq 0$ imply that
  \[
    2E + 3M + 4Z \leq 2E + 6(L - B) = 6(L-R) - 4E \leq 6(L-R). \qedhere
  \]
\end{proof}
By Lemma~\ref{lem:first-bounds}, if $B = L$, there can be no gates of types $P$, $M$, or $Z$. The following lemma gives a condition for also excluding gates of type $D$ and type $U$.
\begin{lem}
  \label{lem:all-T}
  Suppose that every qubit represents a vector of $W$ initially and finally, and that $B = L$. Suppose also that every appearance is of a vector that has not previously been represented on any qubit, including initially. Then every CNOT gate has type $T$.
\end{lem}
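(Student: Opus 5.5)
The plan is to eliminate the non-$T$ types one group at a time. First, Lemma~\ref{lem:first-bounds} applies, because the number of qubits representing a vector of $W$ is $N$ both initially and finally. It gives $0 = L - B = Z + \tfrac{P+M}{2}$, so $P = M = Z = 0$. Its proof also gives the balance identity $2U + P = 2D + M$, which now reads $U = D$. It therefore suffices to show $U = 0$, since then $D = 0$ as well and every gate is of type $T$.

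Next, I will use the bookkeeping of listed pairs and the set $A$ from Lemma~\ref{lem:easy-gates}, which is available because every qubit initially represents a vector of $W$. Qubits are added to $A$ only at gates of type $P$, $M$ or $Z$. There are none, so $A$ stays empty throughout the circuit. By Lemma~\ref{lem:easy-gates}, every type $U$ gate acts either on a listed pair or on two members of $A$. Hence every type $U$ gate is of type $U_1$: it acts on a pair $\{i,j\}$ created by an earlier type $D$ gate, and neither qubit has taken part in any CNOT in between.

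The key step is to pin down which vectors such a $U_1$ gate can make appear. At the creating type $D$ gate, the qubits represented independent vectors $u, v$, and by \eqref{eq:WcapS} we have $W \cap S = \Span\{u,v\}$ with $S = P_i \oplus P_j$. Between the two gates, only single-qubit gates can act on $i$ and $j$. These preserve $P_i$ and $P_j$, and CNOTs on other qubits do not touch these planes, so $S$ is unchanged up to the $U$ gate, and the $U$ gate itself preserves $S$ as in the proof of Lemma~\ref{lem:six-types}. The two vectors that appear at the $U$ gate therefore lie in $W \cap S = \Span\{u,v\}$. They are distinct and nonzero, since a qubit represents at most one vector and the two planes meet only in $0$. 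So they are two distinct elements of $\{u, v, u+v\}$, and at least one of them is $u$ or $v$. Both $u$ and $v$ were already represented before the type $D$ gate, which contradicts the hypothesis that every appearance is of a vector not previously represented. Hence $U = 0$, so $D = U = 0$.

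The main point of care is the invariance of $S$ (and hence of $W \cap S$) between the creating $D$ gate and the $U_1$ gate. This follows from the definition of a listed pair together with the invariance of each $P_i$ under single-qubit gates. Everything else is direct counting from the earlier lemmas.
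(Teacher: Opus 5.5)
Your proof is correct and follows essentially the same route as the paper: Lemma~\ref{lem:first-bounds} eliminates types $P$, $M$, $Z$, the set $A$ stays empty, and a $U$ gate on a listed pair produces two appearances in $\Span\{u,v\}$, at least one of which ($u$ or $v$) was already represented, contradicting the freshness hypothesis. The only difference is the order of elimination: the paper rules out $D$ gates first and then $U$ gates, because no remaining gate type makes a qubit stop representing a vector, whereas you rule out $U$ gates first and then get $D=0$ from the balance identity $U=D$.
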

\begin{proof}
  As noted above, Lemma~\ref{lem:first-bounds} rules out gates of type $P$, $M$, and $Z$, so we have to consider only gates of type $D$, $U$, and $T$. Recall the construction, preceding Lemma~\ref{lem:easy-gates}, of the set $A$ of certain qubits representing no vectors. This set is initially empty, and no gates of types $D$, $U$, or $T$ add elements to $A$. Therefore, every type $D$ gate creates a listed pair of qubits representing no vectors; as gates of type $T$ require both participating qubits to represent vectors immediately before the gate, the next gate acting on the qubits in a listed pair must be of type $U$. Such a later gate must exist since both qubits must represent vectors at the end of the circuit.

  Let $u, v$ be the vectors before a gate of type $D$. The later type $U$ gate produces two independent vectors in $\Span \{u, v\}$. Both count as appearances since neither qubit represents a vector immediately before the gate. But the only nonzero vectors in this span are $u$, $v$, and $u+v$, and both $u$ and $v$ have previously been represented. Thus at most one of the two output vectors can be a vector that has not been represented before, contradicting the assumption. Thus there are no gates of type $D$.

  Since all qubits initially represent vectors and there are no gates that make them cease to do so, there are also no type $U$ gates.
\end{proof}

\section{The reduction}
\label{sec:reduction}
Recall that we aim to reduce the problem of finding a minimum CNOT circuit for a given Clifford operation from Hamiltonian path in a grid graph with specified endpoints for which NP-completeness is known \cite{Itai1982}. Here, a grid graph $G$ is the subgraph of the integer square lattice $\bbZ^2$ in which vertices correspond to points with integer coordinates and edges connect any pair of vertices at unit distance.

First, suppose the graph $G$ is connected, has $m \geq 2$ vertices, and has distinct specified endpoints $s_G, t_G \in V(G)$. A Hamiltonian path in $G$ is a path that visits each vertex exactly once, and we require that it starts at $s_G$ and ends at $t_G$. Let
\begin{gather*}
  \pmin = \min \{ p \mid (p, q) \in V(G) \}, \quad 
  \pmax = \max \{ p \mid (p, q) \in V(G) \}, \\
  \qmin = \min \{ q \mid (p, q) \in V(G) \}, \quad
  \qmax = \max \{ q \mid (p, q) \in V(G) \},
\end{gather*}
and
\[
  d = (\pmax - \pmin) + (\qmax - \qmin).
\]
Following Li and Guan \cite{Li2026}, define an injection $\enc : V(G) \to \bbF_2^d$ by
\[
  \enc(p, q) = (1, \dots, 1, 0, \dots, 0, 1, \dots, 1, 0, \dots, 0),
\]
where the numbers of $1$s and $0$s in each block are $p - \pmin$, $\pmax - p$, $q - \qmin$, and $\qmax - q$, respectively. Note that two vertices are adjacent in $G$ if and only if their encodings differ in exactly one position. Define
\[
  V = \{\enc(v) + \enc(s_G) \mid v \in V(G) \}, \quad t = \enc(t_G) + \enc(s_G).
\]
Then $\lvert V \rvert = m$, and we have $0, t \in V$ and $t \not= 0$. The subgraph $Q_d[V]$ of the $d$-dimensional hypercube whose vertices are $V$, and whose edges connect pairs of vertices that differ in exactly one position, is isomorphic to $G$ with endpoints $0$ and $t$. Recall also from \cite{Li2026} that $d \leq m - 1$.

\subsection{The target operator}
\label{sec:target-operator}
Let $r = 6(m-1) + 1$, $R = mr$, and $N = d + 1 + R$. As in Section~\ref{sec:data-recorder-variables}, introduce data variables $x_0, x_1, \dots, x_d$, and for each $v \in V$ introduce $r$ recorder variables $y_{v,a}$, $a = 1, \dots, r$. For each $v = (v_1, \dots, v_d) \in V$, define $f_v \in \bbF_2^{2N}$ by
\[
  f_v = x_0 + \sum_{j=1}^d v_j x_j.
\]
We will require that our target operator maps $x_0$ to $f_t$, that it maps $x_j$ to itself for $j = 1, \dots, d$, and that it maps the variable $y_{v,a}$ to $y_{v,a} + f_v$, for all $v \in V$ and $a = 1, \dots, r$. We refer to the qubit corresponding to $x_0$ as the \emph{path qubit}.

Let $A \in \GLn$ be the matrix of this operator in the basis $x_1, \dots, x_d, x_0, y_{v,a}$, where $v \in V$ and $a = 1, \dots, r$. That is,
\[
  A = \begin{pmatrix}
    I_d & 0 & 0 \\
    t & 1 & 0 \\
    X & \mathbf{1} & I_R \\
  \end{pmatrix},
\]
where $X$ contains $r$ copies of each vertex row, in recorder order. Note that $A$ is lower unitriangular, that $A - I$ is strictly lower triangular and $(A-I)^3 = 0$.

The output instance of the reduction is the matrix $F = \rho(A) \in \Spn$ and the gate count threshold $k = R + m - 1$. Since $d \leq m-1$, the number of qubits satisfies $N=O(m^2)$. The dense binary tableau $F$ therefore has $4N^2=O(m^4)$ entries. Both $F$ and $k$ are computable in polynomial time; the quadratic bound refers to the number of qubits.

Note that compared to \cite{Li2026}, we have used $r \geq m$ copies of each recorder qubit instead of only $m$. For the CNOT-only case, the exact choice does not affect the reduction, and by replicating the argument, one finds that any number of copies at least $m$ would work.
\begin{lem}
  \label{lem:li-guan-generalised}
  Let $s \geq m$ be an integer and let $R = ms$ and $N = d + 1 + R$.  Let $X_s \in \bbF_2^{R\times d}$ contain $s$ copies of each vertex of $V$, in any fixed order, and define
  \[
  A_s=
  \begin{pmatrix}
    I_d & 0 & 0 \\
    t & 1 & 0 \\
    X_s & \mathbf{1} & I_R
  \end{pmatrix}.
  \]
  Then $A_s$ admits a circuit consisting only of CNOTs, on exactly $N$ labelled qubits with unrestricted connectivity, using at most $R + m - 1$ gates, if and only if $Q_d[V]$ has a Hamiltonian path from $0$ to $t$.
\end{lem}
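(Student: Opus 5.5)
We prove the two directions separately. The easy direction is the construction in Figure~\ref{fig:reduction-overview}. Suppose $0 = v^{(1)}, v^{(2)}, \dots, v^{(m)} = t$ is a Hamiltonian path in $Q_d[V]$. The path qubit initially carries $x_0 = f_{v^{(1)}}$. We first apply $s$ CNOTs from the path qubit to the recorders $y_{v^{(1)},a}$. Then, for each $i = 1, \dots, m-1$, we apply one CNOT from the data qubit $x_j$ to the path qubit, where $j$ is the unique coordinate in which $v^{(i)}$ and $v^{(i+1)}$ differ. This turns $f_{v^{(i)}}$ into $f_{v^{(i+1)}}$, and we follow it by $s$ CNOTs copying $f_{v^{(i+1)}}$ to its recorders. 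The total count is $ms + m - 1 = R + m - 1$. The final state is $x_0 \mapsto f_t$, every $x_j$ is fixed, and $y_{v,a} \mapsto y_{v,a} + f_v$, which is exactly $A_s$.

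For the converse, we follow Li and Guan and work with CNOT-only circuits, tracking the parities carried by each wire. In a CNOT-only circuit every wire carries a vector of $W$. Each recorder wire $y_{v,a}$ must receive at least one CNOT as target, since its output differs from its input. These give $R$ \emph{recorder-target} gates. At most $m-1$ further gates remain.

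The key counting step is as follows. Fix a vertex $v \neq 0$. We claim that some wire must at some time carry exactly $f_v$, or else the cost is too large. Suppose no wire ever carries $f_v$. Then each of the $s \geq m$ recorders $y_{v,a}$ must end up with $y_{v,a} + f_v$ without ever receiving $f_v$ in one step. Consider the last CNOT targeting $y_{v,a}$. Its control carries some $c$, and afterwards the wire equals $y_{v,a} + f_v$. Before that gate the wire carried $y_{v,a} + f_v + c$, which is not $y_{v,a}$ since $c \neq f_v$. Hence an earlier CNOT also targeted $y_{v,a}$, so this recorder costs at least two target gates. Summing over its $s \geq m$ recorders, we exceed the budget of $R + m - 1$ unless the extra gates are shared. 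To rule out sharing, we count gates per recorder wire: a CNOT has a single target, so the extras cannot be shared. This gives at least $R + m > R + m - 1$ gates, a contradiction.

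Hence each $f_v$ with $v \neq 0$ appears on some wire at some time. Since $f_v$ has $x_0$-coefficient $1$, it must be produced by a gate whose target is a non-recorder wire or by a gate targeting a recorder. A recorder wire carries $y$-components that cannot cancel, because recorder $y$'s only move into other wires via controls, and removing them would cost gates. This is the delicate point. To handle it, I would adapt Li and Guan's bookkeeping: with the budget at exactly $R + m - 1$, each recorder receives exactly one target gate, whose control carries $f_v$ at that moment. Then only $m-1$ gates have non-recorder targets, and these must create the $m-1$ distinct vectors $f_v$ with $v \neq 0$ on data wires.

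Each such gate adds one wire vector to another. Starting from the basis $x_0, \dots, x_d$, we use the single $x_0$ content to argue that all $f_v$ appear successively on one wire, each obtained from the previous one by adding some $x_j$. A gate adding a non-basis vector would spend budget without producing a new $f_v$. Thus the successive $f_v$ form a walk in $Q_d[V]$ visiting all $m$ vertices in $m-1$ steps, starting from $0$ and ending at $t$, because the path wire must end at $f_t$. This walk is a Hamiltonian path. The main obstacle I expect is justifying rigorously that the $m-1$ non-recorder gates form a single chain on one wire, with no auxiliary parities on other data wires. I would argue this by showing that each data wire $x_j$, $j \geq 1$, must end equal to itself, so any modification of it costs two gates.
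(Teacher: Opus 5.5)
Your forward construction and your key counting step (a recorder for a vertex $v\neq 0$ whose parity $f_v$ never appears must be targeted at least twice, forcing $L\geq R+s\geq R+m$) are exactly what the paper proves. The paper then stops and invokes the remaining argument of Li and Guan unchanged. Your attempt to supply that remainder yourself has a genuine gap, at the point you flag. The sentence ``with the budget at exactly $R+m-1$, each recorder receives exactly one target gate, whose control carries $f_v$'' does not follow from what you have shown. Knowing that at least $R$ gates target recorders and that every $f_v$ appears somewhere does not rule out two things. First, some $f_v$ might be created by a gate whose target is a recorder wire, so that gate is one of your $R$. Second, a recorder wire might be used as scratch space and targeted twice. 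Your justification (``$y$-components cannot cancel \dots\ removing them would cost gates'') is only a heuristic. The same unproved principle is reused in ``a gate adding a non-basis vector would spend budget'' and ``any modification of $x_j$ costs two gates''.

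The missing idea is to count \emph{new wire values} rather than target gates. In a CNOT-only circuit each gate changes exactly one wire, so it produces exactly one new value. The $R$ outputs $y_{v,a}+f_v$ and the $m-1$ parities $f_v$, $v\neq 0$, are pairwise distinct, none is initial, and each must be produced at least once (the parities by your counting step). Hence $L\geq R+m-1$, so equality holds. Every gate then produces one of these vectors for the first and only time, and nothing else, in particular no initial vector, is ever produced. So every wire value is an initial vector, some $f_v$, or some $y_{v,a}+f_v$, and each of these has at most one recorder coordinate. Everything you want now follows.
\begin{itemize}
\item A data wire $x_j$, $j\geq 1$, is never targeted, since restoring it would reproduce $x_j$.
\item The gate producing $y_{v,a}+f_v$ must combine $y_{v,a}$ and $f_v$.
\item The first update of a recorder wire cannot produce another recorder output, because the control would have two recorder coordinates. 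Nor can it produce a parity: the control would have to be $y_{v,a}+f_v$, yielding $f_v$, which must already have existed in order to create that control. So each recorder is updated exactly once.
\item The remaining $m-1$ gates therefore all target the path wire and produce the $m-1$ parities. Each uses a control with zero $x_0$-part and zero recorder part, i.e.\ some $x_j$ with $j\geq 1$. This gives the Hamiltonian path from $0$ to $t$.
\end{itemize}
This is precisely the reasoning behind $R+m-1\leq B\leq L$ in Proposition~\ref{prop:soundness} and the row argument of Proposition~\ref{prop:soundness-permutation}. You should either insert it or, as the paper does, cite Li and Guan for this part.
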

\begin{proof}
  As in \cite{Li2026}, a Hamiltonian path gives a CNOT circuit by updating the path qubit along its $m-1$ edges and, at each vertex, adding its parity to the corresponding $s$ recorder qubits; see also Figure~\ref{fig:reduction-overview}. This uses $ms + m - 1 = R + m - 1$ CNOT gates.

  Conversely, suppose a CNOT circuit implements $A_s$ using $L \leq R + m - 1$ gates. Every recorder qubit must be targeted at least once, since its prescribed output differs from its input. If some vertex parity $f_v$ is never represented, each of the $s$ recorder qubits corresponding to $v$ must be targeted at least twice. Thus, for this vertex, the $s$ recorder qubits require at least $2s$ CNOT gates, and the remaining $R - s$ recorder qubits require at least one CNOT each, so
  \[
    L \geq (R-s) + 2s = R + s \geq R + m > R+m-1,
  \]
  which is a contradiction. Thus every vertex parity is represented. The remaining argument of \cite{Li2026} applies unchanged.
\end{proof}

\begin{prop}
  If $Q_d[V]$ has a Hamiltonian path from $0$ to $t$, then $\cost(F) \leq k$.
\end{prop}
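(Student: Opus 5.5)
The plan is to exhibit an explicit CNOT-only circuit, which is in particular a Clifford circuit made of single-qubit gates (none) and CNOT gates. Its cost is then its CNOT count, so it suffices to produce such a circuit with at most $k = R + m - 1$ CNOTs. This is exactly the forward direction of Lemma~\ref{lem:li-guan-generalised} with $s = r = 6(m-1)+1 \geq m$. Applying that lemma gives a CNOT circuit for $A = A_r$ using at most $R+m-1$ gates. A CNOT-only circuit implementing the linear map $A$ has symplectic matrix $\rho(A) = F$, as recalled in Section~\ref{sec:background}, so $\cost(F) \leq k$.

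For completeness, I would also describe the circuit directly, following Figure~\ref{fig:reduction-overview}. Let $0 = v^{(1)}, v^{(2)}, \dots, v^{(m)} = t$ be the Hamiltonian path in $Q_d[V]$. Consecutive vertices $v^{(\ell)}$ and $v^{(\ell+1)}$ differ in exactly one coordinate $j_\ell$.

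The circuit is built in $m$ stages. At stage $\ell$, the path qubit holds $f_{v^{(\ell)}}$. For each $a = 1,\dots,r$, apply a CNOT with control the path qubit and target the recorder qubit $y_{v^{(\ell)},a}$; this maps $y_{v^{(\ell)},a}$ to $y_{v^{(\ell)},a} + f_{v^{(\ell)}}$. Then, for $\ell < m$, apply a CNOT with control $x_{j_\ell}$ and target the path qubit. This changes its content to $f_{v^{(\ell)}} + x_{j_\ell} = f_{v^{(\ell+1)}}$.

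Since the path visits each vertex of $V$ exactly once, every recorder qubit is targeted exactly once, and it receives the correct parity. The coordinate qubits $x_1,\dots,x_d$ are only ever controls, so they are unchanged. The path qubit ends at $f_t$. Hence the circuit implements $A$. Its gate count is $mr + (m-1) = R + m - 1 = k$.

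There is essentially no obstacle here. The only points to check are that the row action of CNOT matches the conventions of Section~\ref{sec:background}, namely $a_j \leftarrow a_j + a_i$ for control $i$ and target $j$, and that the $Z$-part of the tableau becomes $(A^{-1})^\top$ automatically. Both hold by the stated correspondence between CNOT-only circuits and $\rho(A)$.
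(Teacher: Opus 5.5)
Your proposal is correct and matches the paper's proof: apply the forward direction of Lemma~\ref{lem:li-guan-generalised} with $s = r \geq m$, and note that the resulting CNOT-only circuit, viewed as a Clifford circuit, implements $\rho(A) = F$ with $k = R + m - 1$ CNOTs. Your explicit stage-by-stage circuit is the same construction the paper sketches in that lemma's proof and in Figure~\ref{fig:reduction-overview}.
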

\begin{proof}
  By Lemma~\ref{lem:li-guan-generalised}, there exists a CNOT circuit implementing $A_r$ with at most $R + m - 1 = k$ gates. The same circuit, viewed as a Clifford circuit, implements $F = \rho(A)$, so $\cost(F) \leq k$.
\end{proof}

\begin{prop}
  \label{prop:soundness}
  If a Clifford circuit implements $F = \rho(A)$ with $L \leq R + m - 1$ CNOT gates, then $Q_d[V]$ has a Hamiltonian path from $0$ to $t$.
\end{prop}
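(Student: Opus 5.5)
The plan is to show that any such circuit has only type $T$ CNOT gates, then turn it into a CNOT-only circuit for $A$ with the same gate count, and conclude by Lemma~\ref{lem:li-guan-generalised} with $s = r \geq m$.

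\textbf{Setup.} Take $W = \Span\{x_0,\dots,x_d, y_{v,a}\}$, the span of the first $N$ standard basis vectors, as in Section~\ref{sec:data-recorder-variables}.

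\begin{itemize}
\item Initially, each qubit represents its own basis vector.
\item Finally, the tableau is $F = \rho(A)$, so the row $a_i$ of every qubit is a row of $A$ and lies in $W$. Since $\dim(W \cap P_i) \leq 1$, qubit $i$ represents exactly $a_i$ at the end.
\item In particular, recorder qubit $(v,a)$ ends representing $q_{v,a} = y_{v,a} + f_v$, so the hypotheses of Section~\ref{sec:data-recorder-variables} hold with $f_{(v,a)} = f_v$.
\end{itemize}

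\textbf{Every vertex parity is represented.} Suppose some $f_v$ is never represented on any qubit. Lemma~\ref{lem:recorder-count} then gives
\[
  r \leq 6(L - R) \leq 6(m-1) < r,
\]
a contradiction. Hence every $f_v$, $v \in V$, is represented at some point.

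\textbf{Counting appearances.} For $v \neq 0$, the vector $f_v$ is not an initial basis vector, since it involves $x_0$ and $v \neq 0$. So $f_v$ must have at least one appearance. Each $q_{v,a}$ likewise needs at least one appearance. These are $R + m - 1$ pairwise distinct vectors: the $q$'s are distinct and are not data vectors, and the $f_v$ are distinct because $\enc$ is injective. Combined with Lemma~\ref{lem:first-bounds},
\[
  R + m - 1 \leq B \leq L \leq R + m - 1 .
\]
Therefore $B = L$, and there are no appearances besides one appearance of each of these vectors.

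It follows that every appearance is the first occurrence of its vector. None of these vectors is represented initially, and each appears only once, so none was represented before its appearance. Lemma~\ref{lem:all-T} now applies and shows that every CNOT gate is of type $T$.

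\textbf{Simulation by a CNOT-only circuit.} Since all gates are of type $T$, every qubit represents exactly one vector $w_i \in W$ at all times. Single-qubit gates do not change the $w_i$. By Lemma~\ref{lem:six-types}, a type $T$ gate on qubits $i, j$ changes $(w_i, w_j)$ to either $(w_i, w_i + w_j)$ or $(w_i + w_j, w_j)$. This is exactly the action on rows of a CNOT with control $i$ and target $j$, or the reverse.

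Replace each Clifford CNOT by the CNOT with the matching orientation and delete all single-qubit gates. The resulting CNOT-only circuit maps each initial basis vector to the final $w_i$, which equals the $i$th row of $A$. So it implements $A = A_r$ using $L \leq R + m - 1$ gates. Since $r = 6(m-1)+1 \geq m$, Lemma~\ref{lem:li-guan-generalised} yields a Hamiltonian path in $Q_d[V]$ from $0$ to $t$.

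\textbf{Main obstacle.} The delicate step is the counting argument that establishes the novelty hypothesis of Lemma~\ref{lem:all-T}. It requires checking that the $R + m - 1$ forced appearances are of pairwise distinct vectors, none initially present, so that the tight budget leaves room for nothing else.
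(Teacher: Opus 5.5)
Your proposal is correct and follows essentially the same route as the paper. You rule out an unrepresented vertex parity via Lemma~\ref{lem:recorder-count} with $r = 6(m-1)+1$, squeeze $R+m-1 \leq B \leq L \leq R+m-1$ to invoke Lemma~\ref{lem:all-T}, read off a CNOT-only circuit for $A$ from the represented vectors using Lemma~\ref{lem:six-types}, and finish with Lemma~\ref{lem:li-guan-generalised}; your extra justifications (why $f_v$ for $v \neq 0$ is not an initial vector, and why a single appearance of a non-initial vector means it was never represented before) only spell out details the paper leaves implicit.
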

\begin{proof}
  Let $W$ be the span of all initial first $N$ rows. Since the target is $\rho(A)$, each final plane $P_i$ meets $W$ in the span of the corresponding row of $A$. In particular, all qubits represent vectors initially and finally, and Lemma~\ref{lem:recorder-count} applies.

  Suppose that there exists a $v \in V$ such that $f_v$ is not represented at any point of the circuit. Then Lemma~\ref{lem:recorder-count}, applied to the $r$ recorders corresponding to $v$, would give
  \[
    6(m-1) + 1 = r \leq 6(L-R) \leq 6(m - 1),
  \]
  so each $f_v$ must be represented at some point in the circuit.

  There are $R$ recorder qubits and hence $R$ marked appearances. Each of the $m - 1$ vectors $f_v$, $v \not= 0$, also requires an appearance; these vertex parities are pairwise distinct and differ from all recorder outputs, so
  \[
    R + m - 1 \leq B \leq L \leq R + m - 1.
  \]
  It follows that the only appearances are the $R$ marked recorder appearances and one appearance of each $f_v$ for $v \not= 0$. None of the vectors is an initial vector and no vector appears twice, so by Lemma~\ref{lem:all-T}, every CNOT has type $T$.

  At every stage, take the unique represented vector from each qubit. These are linearly independent because they belong to distinct summands $P_i$. As they are $N$ vectors in the $N$-dimensional space $W$, they form a basis of $W$. By Lemma~\ref{lem:six-types}, relative to the initial basis, the corresponding row matrix starts at the identity, changes by one elementary row addition at each CNOT, and ends at $A$. We have therefore obtained a CNOT-only circuit for $A$ with exactly $R + m - 1$ gates.

  By Lemma~\ref{lem:li-guan-generalised}, this implies that $Q_d[V]$ has a Hamiltonian path from $0$ to $t$.
\end{proof}

\begin{proof}[Proof of Theorem~\ref{thm:main}]
  The construction of the Clifford synthesis instance $F$ is polynomial in the size of the Hamiltonian path instance, and the two preceding propositions show that a grid graph has a Hamiltonian path with specific endpoints if and only if  $\cost(F) \leq k$. This proves NP-hardness and by the form of the target operator, we see that it holds already for instances $F = \rho(A)$ with $A$ unitriangular and $(A - I)^3 = 0$.

  For membership in NP, combine consecutive single-qubit gates into one layer between each pair of CNOT gates, and into layers at the input and output. A circuit with $L$ CNOT gates then has at most $L + 1$ single-qubit layers, and each layer consists of at most $N$ elements of the fixed finite group $\mathrm{Sp}_2(\bbF_2)$, so it has a description of size $O(N)$.

  There is a polynomial upper bound $p(N)$ on the number of CNOT gates needed to synthesize an arbitrary Clifford operator \cite{Aaronson2004}. For a yes-instance with $k < p(N)$, a circuit with at most $k$ CNOT gates is a polynomial-size certificate. If $k \geq p(N)$, use a synthesis with at most $p(N)$ CNOT gates instead. Multiplying the gate tableaux and comparing the result with $F$ verifies the certificate in polynomial time.

  In the construction, we have assumed that $m \geq 2$, that the grid graph is connected, and that the endpoints are distinct. An instance with a disconnected graph, or one with equal endpoints and $m \geq 2$, has no Hamiltonian path and can be sent to a fixed no-instance. The case $m = 1$ with both endpoints at that single vertex is a yes-instance and can be sent to the one-qubit identity with threshold zero. 
\end{proof}

\section{Arbitrary two-qubit Clifford gates}
\label{sec:arbitrary-two-qubit-clifford}
To prove that NP-completeness extends to the case in which not only CNOT gates but arbitrary two-qubit Clifford gates are considered as part of the cost, we will express a circuit of arbitrary two-qubit Clifford gates as a circuit of CNOT gates and single-qubit Clifford gates followed by a permutation. We then extend the reduction to allow this output permutation.

\begin{lem}
  \label{lem:two-qubit-clifford-classification}
  Every two-qubit Clifford gate is a product of single-qubit Clifford gates, at most one CNOT, and at most one SWAP. Consequently, a circuit with $L$ arbitrary two-qubit Clifford gates and free single-qubit Clifford gates has a symplectic representation of the form
  \[
    F = \rho(\Pi) F_C,
  \]
  where $\Pi$ is a permutation matrix and $C$ is a circuit with at most $L$ CNOT gates and any number of single-qubit Clifford gates.
\end{lem}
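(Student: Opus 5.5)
The first claim is a classification of $\mathrm{Sp}_4(\bbF_2)$ modulo single-qubit gates. I would prove it by double-coset counting. Let $H = \mathrm{Sp}_2(\bbF_2) \times \mathrm{Sp}_2(\bbF_2)$ be the local subgroup, of order $36$, and let $\sigma$ be the SWAP. The group $\langle H, \sigma\rangle = H \rtimes \bbZ_2$ has order $72$ and normalizes $H$.

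The plan is to show that $\mathrm{Sp}_4(\bbF_2)$, of order $720$, is the union
\[
  \langle H,\sigma\rangle \;\cup\; \langle H,\sigma\rangle\,\mathrm{CNOT}\,\langle H,\sigma\rangle .
\]
The first set consists of local gates and local gates times SWAP. The second consists of elements with one CNOT and possibly one SWAP, since $\sigma$ normalizes $H$ and can be moved to one side.

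Rather than count double-coset sizes, I would use an invariant. For $g \in \mathrm{Sp}_4$, look at the images of the planes $P_1, P_2$. Either $g$ maps $\{P_1,P_2\}$ to itself, in which case $g \in \langle H,\sigma\rangle$. Otherwise, consider the image $g(X_1)$ and similar vectors. A nonlocal Clifford must map some single-qubit Pauli to a weight-two Pauli. I would show that after left and right multiplication by local gates, and possibly a SWAP, one can reduce to $g(X_1) = X_1X_2$ with $g$ fixing $Z_2$.

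This normal form is the step needing care. Concretely, choose a weight-one vector $u$ with $g(u)$ of weight two, which exists because $g$ is nonlocal. Apply local gates on the output side to make $g(u) = X_1X_2$, and on the input side to make $u = X_1$. Consider $g(Z_1)$: it must pair to $1$ with $X_1X_2$. Then adjust by local gates stabilizing $X_1$ on the input and $X_1X_2$ on the output; these include CNOT-conjugation symmetries.

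The cleaner alternative is the orbit count
\[
  |H\,\mathrm{CNOT}\,H| = \frac{36^2}{|H \cap \mathrm{CNOT}\,H\,\mathrm{CNOT}|}.
\]
Computing the stabilizer (order $4$, giving $324$) and adding the SWAP copies gives $72 + 648 = 720$, exhausting the group. I would go with this count, verifying the intersection order directly. The intersection consists of local gates whose CNOT-conjugate is local, namely the group generated by $X$-type Paulis on the control and $Z$-type on the target, up to phases. Its symplectic image has order $4$ in $\mathrm{Sp}_4$. The fact that $H\,\mathrm{CNOT}\,H$ and $H\,\mathrm{CNOT}\,\sigma H$ are disjoint from each other and from $\langle H,\sigma\rangle$ follows from the count itself, or from the Schmidt-rank invariant.

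For the consequence, write the circuit as $G_L \cdots G_1$ with each $G_\ell = \sigma^{\epsilon_\ell} C_\ell$, where $C_\ell$ contains at most one CNOT plus local gates. I would then push every SWAP to the left end. Commuting a SWAP past a single-qubit gate or CNOT relabels its qubits, keeping it in the same class, so the number of CNOTs is unchanged. The accumulated product of transpositions is a qubit permutation whose symplectic matrix is $\rho(\Pi)$, which gives $F = \rho(\Pi) F_C$ with $C$ having at most $L$ CNOTs.
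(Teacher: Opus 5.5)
Your argument is correct in substance, but it takes a genuinely different route for the first claim. The paper does not prove the classification. It quotes the known result that every two-qubit Clifford gate is locally equivalent to the identity, CNOT, iSWAP, or SWAP, and notes that iSWAP is locally a CNOT followed by a SWAP. It then moves the SWAPs to the output by relabelling, exactly as in your last paragraph. Your count $36+36+324+324=720=\lvert\mathrm{Sp}_4(\bbF_2)\rvert$ instead makes the lemma self-contained at the symplectic level, which is all the paper actually uses. To state it for unitaries you need one more sentence: two Cliffords with the same tableau differ by a Pauli, which is local. The price is a small computation, and two of your justifications for it need repair.

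\emph{The stabilizer $H\cap cHc$.} Your description is wrong even though its order is right: Paulis have trivial symplectic image, and the roles of $X$ and $Z$ are swapped. Take $c\colon X_1\mapsto X_1X_2,\ Z_2\mapsto Z_1Z_2$. A local $h=h_1\oplus h_2$ lies in $cHc$ iff it preserves $\Span\{X_1X_2,Z_1\}$ and $\Span\{X_2,Z_1Z_2\}$. This forces
$h_1(Z_1)=Z_1$, $h_1(X_1)\in\{X_1,Y_1\}$, $h_2(X_2)=X_2$, and $h_2(Z_2)\in\{Z_2,Y_2\}$.
So the intersection is the symplectic image of $\{I,S\}$ on the control times $\{I,\sqrt{X}\}$ on the target, i.e.\ the local gates commuting with CNOT. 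It has order $4$, as you claimed.

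\emph{Disjointness.} This cannot ``follow from the count itself'': sizes summing to $720$ give a covering only once disjointness is known. All four sets are $(H,H)$-double cosets, hence equal or disjoint. Every pair but one is separated by the trivial facts $\sigma\notin H$ and $c\notin H\cup H\sigma$. The one substantive point is $c\sigma\notin HcH$. Operator Schmidt rank ($2$ versus $4$) gives this, though it would not separate $Hc\sigma H$ from $H\sigma$, since both have rank $4$. A purely symplectic invariant also works: record which image planes $g(\Span\{X_i,Z_i\})$ meet which coordinate planes $\Span\{X_j,Z_j\}$ nontrivially. This pattern is unchanged under $g\mapsto hgh'$ with $h,h'\in H$, and it is diagonal for $c$ but antidiagonal for $c\sigma$.
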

\begin{proof}
  It is known (see e.g. \cite{Corcoles2013}, \cite[Thm.~26]{Grier2022}) that every two-qubit Clifford gate is equivalent, up to single-qubit Clifford gates, to an empty circuit, to a single CNOT gate, to an iSWAP gate (which, in turn, is equivalent to the composition of a CNOT gate and a SWAP gate), or to a SWAP gate. In particular, we can choose a decomposition using at most one CNOT gate for every two-qubit gate. Moreover, any SWAP gate can be moved past a later gate by relabelling the qubits, which gives the desired decomposition.
\end{proof}
The permutation may cause recorder outputs to be moved to different qubits; to account for this, we need a new version of Lemma~\ref{lem:recorder-count}. Recall the setup of Section~\ref{sec:data-recorder-variables} and the split of the qubits into data and recorder qubits denoted $x_0, \dots, x_d$ and $y_1, \dots, y_R$ respectively and let $f_r \in \Span \{x_0, \dots, x_d\} \setminus \{0\}$, $r = 1, \dots, R$, be nonzero data vectors. Let $W$ be the span of the initial first $N$ rows, and suppose that at the end of a circuit with $L$ CNOT gates, every qubit represents a vector of $W$, with the vectors $q_r = y_r + f_r$, $r = 1, \dots, R$, being represented on distinct qubits. This time, we take a marked appearance of $q_r$ to be the appearance that begins its final uninterrupted occurrence on its final qubit and as before, let $E = B - R$ be the number of appearances that are not marked.
\begin{lem}
  \label{lem:recorder-count-permuted}
  If a nonzero data vector $f$ is never represented, then
  \[
    \lvert \{r \mid f_r = f \} \rvert \leq 3E + 3M + 4Z \leq 6(L - R).
  \]
\end{lem}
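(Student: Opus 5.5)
I will follow the proof of Lemma~\ref{lem:recorder-count}, with one extra case for the new notion of marked appearance. By Lemma~\ref{lem:easy-gates}, at most $3M + 4Z$ appearances, and hence at most that many marked appearances, occur outside type $T$ and type $U_1$ gates. Its hypotheses hold because every qubit initially represents some $y_r$ or $x_j$ and every qubit finally represents a vector of $W$. It remains to bound the number of recorders $r$ with $f_r = f$ whose marked appearance of $q_r$ (on its final qubit, say qubit $j_r$) happens at a type $T$ or type $U_1$ gate. I will split these recorders according to what qubit $j_r$ represented just before the marked appearance.

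\textbf{Case A.} Qubit $j_r$ had some appearance earlier than the marked one. That earlier appearance is not marked. Distinct $r$ have distinct final qubits $j_r$, and the marked appearance is the last one on $j_r$ (it begins the final uninterrupted occurrence). So at most one such $r$ is charged to any given unmarked appearance, giving at most $E$ recorders.

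\textbf{Case B.} Qubit $j_r$ has had no appearance before. It is then still representing its initial vector $z$, which is some $y_s$ or some $x_i$, up to a stretch of representing nothing after a type $D$ gate. Here the permutation makes a difference: $z$ need not be $y_r$.
\begin{itemize}
\item At a type $T$ gate, Lemma~\ref{lem:six-types} gives $q_r = y_r + f = z + u$, where $u$ is represented on the other qubit. If $z = y_r$, then $u = f$, which is forbidden. Otherwise $u = y_r + f + z$ is represented, and $u \ne q_r$, so this subcase does not charge to that appearance. I would instead charge $r$ to the appearance of $u$ on the other qubit. That appearance is unmarked unless $u$ is some $q_s$, and that happens only if $z = y_s$ or $z$ is a data vector.
\item At a type $U_1$ gate, the argument of Lemma~\ref{lem:recorder-count} via~\eqref{eq:WcapS} gives $q_r \in \Span\{z, b\}$, and so $b \in \{q_r, q_r + z\}$. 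If $z = y_r$, then $b = q_r$, which is an unmarked earlier appearance of $q_r$ on another qubit; this contributes at most $E$. If $z \ne y_r$, then $b = q_r + z$ is again an extra represented vector.
\end{itemize}

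The main obstacle is the subcase $z \ne y_r$. There I would argue that the earlier appearance of the vector $q_r + z$ (when $z = y_s$, this is $y_r + y_s + f$, which is not of the form $q_{s'}$) is unmarked. To make the charging injective, I would note that the pair $(z, q_r)$ determines $r$, and that each initial vector $z$ is consumed by exactly one qubit. The charges then land in at most $E$ unmarked appearances for the $b = q_r$ cases, $E$ for case A, and $E$ for the $z \ne y_r$ cases, which gives $3E + 3M + 4Z$.

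The final inequality follows as before. Lemma~\ref{lem:easy-gates} gives $3M + 4Z \le 6(L - B)$, so
\[
3E + 3M + 4Z \le 3E + 6(L - B) = 6(L - R) - 3E \le 6(L - R),
\]
using $E = B - R \ge 0$.
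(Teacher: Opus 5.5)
Your structure matches the paper's: the $3M+4Z$ bound from Lemma~\ref{lem:easy-gates}, the Case A bound of $E$, and Case B split into $b=q_r$ versus $b=q_r+z$. The gap is your claim that the $z\neq y_r$ bucket contributes at most $E$.

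\textbf{Why the injectivity claim fails.} You charge each such $r$ to an appearance of the vector $q_r+z_r$. An appearance only remembers that vector, not $z_r$, so the fact that $(z,q_r)$ determines $r$ does not make the charging injective. Suppose $z_r=y_s$ and $z_s=y_r$, i.e.\ the final qubits of $q_r$ and $q_s$ are swapped relative to the initial recorder labels. Then
$q_r+z_r=q_s+z_s=y_r+y_s+f$.
Both recorders can be charged to the same appearance. A single qubit holding $y_r+y_s+f$ can be the control of two type $T$ gates whose targets still hold $y_s$ and $y_r$; these gates produce $q_r$ and $q_s$. So your charging bounds that bucket only by $2E$, and summing your three buckets gives $4E+3M+4Z$, not the stated $3E+3M+4Z$. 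The second inequality would survive, since $4E+6(L-B)=6(L-R)-2E$, but the lemma's first inequality is not established.

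\textbf{The missing idea.} The paper closes this by observing that the two Case B buckets charge disjoint sets of unmarked appearances.
\begin{itemize}
\item The $b=q_r$ charges land on appearances of recorder outputs. They are injective because the $q_r$ are distinct.
\item Every $b=q_r+z$ charge lands on an appearance of a vector that is neither initial nor a recorder output.
\end{itemize}
Verifying the second point requires the data case, which you skipped. For $z=x_i$ one has $b=y_r+(f+x_i)$, and $f\neq x_i$ because $x_i$ is initially represented. Hence $b$ is not initial, so it genuinely had an appearance. Its recorder part $y_r$ and data part $f+x_i\neq f$ show it differs from every $q_{s'}$. For $z=y_s$, $b$ has two recorder coordinates, so again it is neither initial nor a recorder output.

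Your remark that $u$ can equal some $q_s$ only when $z=y_s$ or $z$ is a data vector excludes nothing, since those are the only cases when $z\neq y_r$. What is needed, and true, is that $u$ is never a recorder output.

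Within the second class, a collision $q_r+z_r=q_s+z_s$ forces $\{z_r,z_s\}=\{y_r,y_s\}$. So each such appearance is charged at most twice, and the two Case B buckets together contribute at most $2E$. Adding Case A gives $3E$.

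Finally, in your type $U_1$ case with $z\neq y_r$, $b=q_r$ is still possible. That subcase should be sent to the first bucket.
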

\begin{proof}
  Repeat the partition in the proof of Lemma~\ref{lem:recorder-count}, using the final qubit of $q_r$ in place of the original recorder qubit $r$. As before, by Lemma~\ref{lem:easy-gates}, at most $3M + 4Z$ marked appearances occur outside type $T$ and type $U_1$ gates, and the association with earlier appearances on the same qubit still accounts for at most $E$ further outputs, since the final qubits are distinct. As such, we need only modify the proof for the remaining gate types for qubits for which there are no appearances preceding the marked ones.

  As such, fix an $r$ with $f_r = f$, and let $e_r$ be the initial basis vector on the final qubit representing $q_r$, so unlike before, $e_r$ is not necessarily $y_r$, and assume that that qubit has represented only $e_r$ before the marked appearance. Denote by $j$ the other qubit involved in the gate causing the marked appearance. The same type $T$ and $U_1$ arguments show that another qubit earlier represented a vector $b_r$ satisfying
  \[
    b_r = q_r + e_r \quad \text{in type }T, \qquad b_r \in \{q_r, q_r + e_r\} \quad \text{in type } U_1.
  \]
  If $b_r = q_r$, then as in Lemma~\ref{lem:recorder-count}, the appearance of $q_r$ on qubit $j$ is not a marked appearance.
  
  Suppose that $b_r = q_r + e_r$. If $e_r = y_r$, then $b_r = f$, contradicting the assumption that $f$ is never represented. If $e_r$ is a data basis vector, the projections of $b_r$ in the data basis and recorder basis are $f + e_r$ and $y_r$, respectively. Here, $f + e_r$ is nonzero because otherwise $f$ would be equal to $e_r$, contradicting the assumption that $f$ is never represented. If $e_r = y_s$, $s \not= r$, then $b_r = y_r + y_s + f$ has two nonzero coordinates in the recorder basis. Thus in either case, $b_r$ is neither an initial vector, nor any recorder output, so its earlier appearance is not a marked appearance.

  Associate one such unmarked appearance to each output not covered by the first $3M + 4Z + E$ associations. The associations with $b_r = q_r$ are distinct and do not coincide with those of the form $b_s = q_s + e_s$ since the latter are not recorder outputs. If distinct $r, s$ in the latter case give the same vector, $q_r + e_r = q_s + e_s$, then $y_r + y_s = e_r + e_s$. Since $e_r$ and $e_s$ are basis vectors, this gives $\{e_r, e_s \} = \{y_r, y_s\}$, and since $e_r \not= y_r$, we have $e_r = y_s$ and $e_s = y_r$, so $b_r = q_r + e_r = y_r + f + y_s$, so $b_r$ has two nonzero coordinates in the recorder basis. Thus, each unmarked appearance is associated to at most two outputs, contributing at most $2E$ outputs.

  The total number is at most $3E + 3M + 4Z$, and as in Lemma~\ref{lem:recorder-count},
  \[
    3E + 3M + 4Z \leq 3E + 6(L-B) = 6(L - R) - 3E \leq 6(L - R). \qedhere
  \]
\end{proof}
Recall the constructions of the target operator $A$ and gate count threshold $k$ from Section~\ref{sec:target-operator}. We now generalize Proposition~\ref{prop:soundness} by showing that allowing an output permutation does not change the circuit structure forced by the budget $k$.
\begin{prop}
  \label{prop:soundness-permutation}
  If a circuit with at most $k$ CNOT gates implements $\rho(\Pi A)$ for a permutation matrix $\Pi$, then $Q_d[V]$ has a Hamiltonian path from $0$ to $t$.
\end{prop}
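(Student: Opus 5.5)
Our plan mirrors the proof of Proposition~\ref{prop:soundness}, with Lemma~\ref{lem:recorder-count-permuted} taking the place of Lemma~\ref{lem:recorder-count}. Set $W = \Span\{x_0,\dots,x_d,y_{v,a}\}$. The target is $\rho(\Pi A)$, so each final plane $P_i$ meets $W$ in the span of a row of $A$, placed on the qubit given by $\Pi$. Hence every qubit represents a vector of $W$ both initially and finally. Moreover, the $R$ recorder outputs $q_{v,a} = y_{v,a} + f_v$ are represented on distinct final qubits, which is exactly the setup of Lemma~\ref{lem:recorder-count-permuted}.

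\textbf{Step 1: every vertex parity is represented.} Suppose some $f_v$ is never represented. Then Lemma~\ref{lem:recorder-count-permuted} gives $r = 6(m-1)+1 \leq 6(L-R) \leq 6(m-1)$, a contradiction. So every $f_v$ is represented at some point.

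\textbf{Step 2: the budget is tight.} The $R$ marked appearances are appearances of the distinct recorder outputs. In addition, each $f_v$ with $v \neq 0$ must have at least one appearance, since it is not initially represented. These vectors are pairwise distinct data vectors, distinct from all recorder outputs, and distinct from the initial basis vectors: $f_v = x_0$ only when $v = 0$, and the other $f_v$ contain $x_0$ together with some $x_j$. This gives $R + m - 1 \leq B \leq L \leq k = R+m-1$. Hence $B = L$, and the appearances are exactly one per recorder output and one per $f_v$ with $v \neq 0$.

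One point needs care here. In the permuted setting a marked appearance is defined as the start of the final uninterrupted occurrence on the final qubit. We must rule out that some $q_{v,a}$ also has an earlier, unmarked appearance, possibly on another qubit. The tight count excludes this, since there is no slack at all. So every appearance is of a vector never previously represented (in particular not an initial vector), and Lemma~\ref{lem:all-T} shows that every CNOT has type $T$.

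\textbf{Step 3: reading off a CNOT-only circuit.} The argument now follows Proposition~\ref{prop:soundness}. Take the unique represented vector on each qubit; this gives a basis of $W$ at every stage. By Lemma~\ref{lem:six-types}, each type $T$ gate performs one elementary row addition. The row matrix starts at $I$ and ends at $\Pi A$, so we obtain a CNOT-only circuit with exactly $R+m-1$ gates implementing $\Pi A$.

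To finish, we adapt Lemma~\ref{lem:li-guan-generalised}, or more precisely the Li--Guan argument it invokes, to allow an output permutation. This is the step we expect to be the main obstacle. Our first approach is to argue directly from the structure now in hand. There are exactly $m-1$ appearances of non-recorder vectors, namely the $f_v$ with $v \neq 0$, and every gate is type $T$ with a single appearance. Therefore the $R$ gates producing recorder outputs each add some represented $f_v$ to a qubit currently representing a single basis vector $y$. The remaining $m-1$ gates each produce a new $f_v$ from a previously represented vector. The chain argument should then go as follows. Each new $f_v$ arises as $u + w$ where $u$ is some represented vector. The only available represented vectors are initial basis vectors, earlier $f_{v'}$, and recorder outputs, and recorder outputs cannot be used because they would introduce $y$-components. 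So each $f_v$ equals $f_{v'} + x_j$ for an earlier $v'$. This forces the $f_v$ to appear along a tree in $Q_d[V]$ in which edges correspond to single bit flips.

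Tightness must then be upgraded from a tree to a Hamiltonian path from $0$ to $t$. For this we use the output condition: the final data rows are $x_1,\dots,x_d,f_t$ up to permutation. Our argument is that each $T$ gate of the form $(f_{v'}, x_j) \mapsto (f_{v'} + x_j, x_j)$ overwrites $f_{v'}$, and $f_{v'}$ is then lost. A lost $f_{v'}$ can never reappear, since there are no repeated appearances. So all $r$ recorder gates for $v'$ must happen before the overwrite, while only one copy of $f_{v'}$ exists. The alternative, keeping $f_{v'}$ alive by producing $f_v$ on the other qubit, would destroy that $x_j$, and since $x_j$ can never reappear this contradicts the requirement that $x_j$ be represented at the end. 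Hence at any time only one $f$ is alive, the $f_v$ are produced in sequence $0 = v_1, v_2, \dots, v_m$, each step flips one bit, and the final alive parity is $f_t$. This gives the desired Hamiltonian path from $0$ to $t$.
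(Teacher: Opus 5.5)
Your proof is correct, but after Step~2 it follows a different route from the paper. Steps~1 and~2 are the same as the paper's: Lemma~\ref{lem:recorder-count-permuted} forces every $f_v$ to be represented, the budget becomes tight, and Lemma~\ref{lem:all-T} makes every gate type $T$.

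\textbf{The paper's finish.} The paper then proves that $\Pi$ is trivial. Coordinate rows are never updated. The gate producing $q_r$ must combine $y_r$ and $f_r$. The first update of the row holding $y_r$ must be $y_r \mapsto q_r$, after which that row is frozen. So the circuit implements $A$ itself, and Lemma~\ref{lem:li-guan-generalised}, and through it the Li--Guan argument, finishes.

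\textbf{Your finish.} You never determine $\Pi$ and never invoke Li--Guan. Instead you read the Hamiltonian path directly off the tight structure. You track the alive vertex parity, and use two facts: each $x_j$ must survive to the end, and no vector can reappear. Your Step~3 construction of a CNOT-only circuit for $\Pi A$ is therefore not actually used.

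\textbf{Comparison.} Your argument is self-contained. It would equally replace the appeal to ``the remaining argument of \cite{Li2026}'' in Proposition~\ref{prop:soundness}. The paper's argument is more modular: it reduces the permuted case to the unpermuted lemma and yields the structural fact $\Pi = I$.

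Two small points should be written out.
\begin{enumerate}
\item When you exclude recorder vectors as operands for producing a new $f_v$, the $y$-components can cancel for the pair $\{y_s, q_s\}$. This pair must be excluded because its sum $f_s$ was already represented when $q_s$ was produced. This is the same observation the paper uses for the first update of $y_r$. Once it is excluded, the $x_0$-coefficient forces the sum to be $f_{v'} + x_j$.
\item ``Only one $f$ is alive'' should read ``at most one''. The same reasoning covers your unproved assertion that recorder gates target the $y$ qubit. If a recorder gate ever overwrote the alive parity, no parity could be produced afterwards, since every new $f_v$ needs an alive $f_{v'}$ as operand. That contradicts $f_t$ being a final row.
\end{enumerate}
With these filled in, $0 = v_1, \dots, v_m = t$ is a Hamiltonian path as you claim.
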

\begin{proof}
  Every final plane represents a vector, and the final represented vectors are the rows of $A$ in some order. By Lemma~\ref{lem:recorder-count-permuted} with $r = 6(m - 1) + 1$, every $f_v$ is represented. As in Proposition~\ref{prop:soundness}, the only appearances are the $R$ marked appearances and one appearance of each $f_v$ for $v \not= 0$. By Lemma~\ref{lem:all-T} every CNOT gate has type $T$.

  Consider the row matrix formed by the unique nonzero vectors in $W \cap P_i$. Every row belongs to the list consisting of initial basis vectors, the vertex parities $f_v$, and the recorder outputs $q_{v,a} = y_{v,a} + f_v$. Each vector in this list that is not initial is produced by a row addition exactly once. In particular, once a vector is no longer represented, it cannot be produced later.

  We will show that each coordinate and recorder output remains on its prescribed row, i.e., $\Pi$ is the trivial permutation. A coordinate vector $x_j$, $j = 1, \dots, d$, occurs initially and is required among the final rows. No row addition can produce it, so its original row is never updated.

  Next, consider the row addition that produces $q_r$. Its two operands must be $y_r$ and $f_r$. To see this, note again that at every stage of the circuit, every row is either an initial basis vector, a vertex parity, or a prescribed recorder output; and in particular, every row has at most one nonzero coordinate in the recorder basis. Since $q_r$ has recorder part $y_r$, exactly one operand has this recorder part, and the other has recorder part $0$. The only possible rows with recorder part $y_r$ are $y_r$ itself and $q_r$. Since this is the first appearance of $q_r$, that operand must be $y_r$, and the other must be $q_r + y_r = f_r$.

  Consider the first update of the row initially equal to $y_r$. There must be such an update, because $y_r$ is not a final row. This update produces either a recorder output $q_s$ or a vertex parity $f_v$. If it produces $q_s$, by the argument above, $s = r$. If it produces $f_v$, the control of the gate must have recorder part $y_r$ to cancel the recorder part of $y_r$. The control cannot be $y_r$ itself because the rows are independent, so it must be $q_r$. Thus, the new row is $y_r + q_r = f_r$, but by the above, producing $q_r$ requires that $f_r$ is already available as an operand, so this operation would produce a previously represented vector, which is impossible.

  Thus, the first update of the original recorder row $y_r$ must give $q_r$. Since $q_r$ is required finally and cannot be produced again, this row is never updated afterwards. It follows that all coordinate and recorder rows finish in their original positions with their prescribed values.

  The only remaining row is the path row. This row must finish at $f_t$, the circuit implements $A$ itself, and hence $\Pi$ is trivial. The result now follows from Lemma~\ref{lem:li-guan-generalised}.
\end{proof}
We say that two two-qubit gates are \emph{locally Clifford-equivalent} if they are equal up to single-qubit Clifford gates.
\begin{thm}
  Fix a set $\Gamma$ of two-qubit Clifford gates containing at least one gate locally Clifford-equivalent to the CNOT gate. Deciding whether a target Clifford operator can be implemented using at most $k$ gates from $\Gamma$ and free single-qubit Clifford gates is NP-complete.

  This result holds for the targets $\rho(A)$ with $A$ lower unitriangular and $(A - I)^3 = 0$.

  In particular, this result applies to the case where $\Gamma$ consists of all two-qubit Clifford gates.
\end{thm}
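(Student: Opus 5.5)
We reuse the instance $F = \rho(A)$ and threshold $k = R + m - 1$ from Section~\ref{sec:target-operator} unchanged, and show that $F$ can be implemented with at most $k$ gates from $\Gamma$ if and only if $Q_d[V]$ has a Hamiltonian path from $0$ to $t$.

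\emph{Completeness.} Suppose a Hamiltonian path exists. Lemma~\ref{lem:li-guan-generalised} gives a CNOT-only circuit for $A$ with $k$ gates. Fix a gate $G \in \Gamma$ that is locally Clifford-equivalent to CNOT, so that $\mathrm{CNOT} = (\ell_1 \otimes \ell_2)\, G\, (\ell_3 \otimes \ell_4)$ for single-qubit Cliffords $\ell_i$. Replace each CNOT in the circuit by this expression, placed on the appropriate ordered pair of qubits. If $G$ realises CNOT with its control and target reversed, conjugate by Hadamards, which are free. The result uses $k$ gates from $\Gamma$ and implements $F$.

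\emph{Soundness.} Suppose a circuit with at most $k$ gates from $\Gamma$ and free single-qubit gates implements $F$.
\begin{enumerate}
\item Every element of $\Gamma$ is a two-qubit Clifford gate, so Lemma~\ref{lem:two-qubit-clifford-classification} applies. It rewrites the circuit's symplectic matrix as $F = \rho(\Pi) F_C$, where $C$ has at most $k$ CNOT gates and $\Pi$ is a permutation.
\item Hence $F_C = \rho(\Pi^{-1}) \rho(A) = \rho(\Pi^{-1} A)$, and $\Pi^{-1}$ is again a permutation matrix.
\item Proposition~\ref{prop:soundness-permutation} then applies to $C$ and yields the Hamiltonian path.
\end{enumerate}
The step needing care is whether the permutation acts on rows, i.e.\ whether $\rho(\Pi)$ really multiplies on the left as $\rho(\Pi^{-1}A)$. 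This holds because $\rho$ is a homomorphism: $\rho(P)\rho(A) = \rho(PA)$. It also relies on the convention $F_{GU} = F_G F_U$, under which commuting SWAPs to the end places the permutation on the left. NP-hardness then follows by composing this with the grid-graph encoding and the edge-case handling from the proof of Theorem~\ref{thm:main}. The form of the targets is the same, namely $A$ lower unitriangular with $(A-I)^3 = 0$.

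\emph{Membership in NP.} A certificate consists of at most $k$ gates from $\Gamma$ (a fixed finite set, so each has constant description size) together with layers of single-qubit gates. If $k$ is large, we cap it as follows. By \cite{Aaronson2004}, any Clifford operator has a circuit with polynomially many CNOTs. Each CNOT can be replaced by one gate of $\Gamma$ via local equivalence, as above, so the number of $\Gamma$-gates needed is polynomially bounded. Verification multiplies tableaux, as in the proof of Theorem~\ref{thm:main}.

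Finally, if $\Gamma$ is the set of all two-qubit Clifford gates, it contains CNOT itself, so the last claim of the theorem is a special case. The main obstacle is already handled by Proposition~\ref{prop:soundness-permutation}; what remains here is bookkeeping, and the orientation convention in step 2 is the only point requiring real checking.
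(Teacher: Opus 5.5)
Your proposal is correct and follows the paper's proof essentially step for step. It uses the same instance $F=\rho(A)$ with $k=R+m-1$, and for completeness it replaces each CNOT by a locally equivalent gate from $\Gamma$. For soundness it applies Lemma~\ref{lem:two-qubit-clifford-classification} and then Proposition~\ref{prop:soundness-permutation} to $\rho(\Pi^{-1}A)$, and for NP membership it uses the certificate-and-tableau-verification argument. Your capping step, which replaces each CNOT of an Aaronson--Gottesman synthesis by one $\Gamma$-gate, spells out what the paper leaves implicit when it says the argument of Theorem~\ref{thm:main} applies.
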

\begin{proof}
  Given an instance of Hamiltonian path in grid graphs, construct the target $F = \rho(A)$ and let $k = R + m - 1$ as in Section~\ref{sec:target-operator}.

  If a Hamiltonian path exists, there is a CNOT circuit for $F$ with $k$ CNOT gates. Replace each gate with a gate from $\Gamma$ that is locally Clifford-equivalent to it. This gives a circuit for $F$ with $k$ gates from $\Gamma$.

  Conversely, suppose a circuit for $\rho(A)$ uses $L \leq k$ gates from $\Gamma$. By Lemma~\ref{lem:two-qubit-clifford-classification}, this gives a circuit with at most $L$ CNOTs implementing $\rho(\Pi^{-1})\rho(A) = \rho(\Pi^{-1}A)$, and Proposition~\ref{prop:soundness-permutation} gives a Hamiltonian path.

  The certificate and tableau verification argument from Theorem~\ref{thm:main} applies here as well, ensuring membership in NP.
\end{proof}
The above also proves NP-completeness of CNOT count (and the more general $\Gamma$ count) when SWAP gates are free: move all SWAP gates to the end of the circuit and apply Proposition~\ref{prop:soundness-permutation}.

The above machinery is also sufficient to prove NP-completeness in a cost model where single-qubit Clifford gates are available at some more general cost. Fix integers $a \geq 0$ and $b > 0$. Assign cost $a$ to each single-qubit Clifford gate and cost $b$ to each gate from a fixed set $\Gamma$ of two-qubit Clifford gates containing CNOT. With respect to these fixed costs, for a Clifford operator $U$, let $\cost_{\Gamma,a,b}(U)$ denote the minimal cost of any circuit implementing $U$. In this case, we can no longer ignore the phase vector of the Clifford operator: for example, if $a > 0$, an operator like the single-qubit NOT gate whose symplectic part is trivial can no longer be implemented at cost zero; nevertheless, we find that we may take the phase vector to be zero in the reduction.
\begin{thm}
  Let $a \geq 0$ and $b > 0$ be integers, and let $\Gamma$ be a set of two-qubit Clifford gates containing CNOT. Given a Clifford operator $U$, specified by its full signed tableau, deciding whether $\cost_{\Gamma,a,b}(U) \leq k$ is NP-complete. The result holds both with and without additional free SWAP gates, and NP-completeness holds already for targets $(\rho(A), 0) \in \Spn \times \bbF_2^{2N}$ with $A$ lower unitriangular and $(A - I)^3 = 0$.
\end{thm}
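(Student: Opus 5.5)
The reduction stays the same: from a grid Hamiltonian path instance, build the target $U$ with signed tableau $(\rho(A),0)$ and take threshold $k' = bk$ with $k = R + m - 1$. The phase vector is zero because $U$ is the unitary of a CNOT-only circuit. A CNOT circuit maps computational basis states to computational basis states without phases, so its signed tableau has zero phase vector. Any circuit with $L$ gates from $\Gamma$ and $s$ single-qubit gates costs $as + bL$, and we want to show that $\cost_{\Gamma,a,b}(U)\le bk$ holds exactly when a Hamiltonian path exists.

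\emph{Completeness.} If a Hamiltonian path exists, Lemma~\ref{lem:li-guan-generalised} gives a CNOT-only circuit with $k$ gates. Since CNOT $\in\Gamma$, this circuit uses no single-qubit gates, so its cost is $bk$. Its signed tableau is exactly $(\rho(A),0)$, so it implements $U$ up to global phase. Free SWAPs are not needed here.

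\emph{Soundness.} Suppose a circuit implements $U$ with cost at most $bk$. Then $bL \le as + bL \le bk$, and $b>0$ gives $L\le k$. Discard the signs and keep only the symplectic part: the circuit is a circuit of single-qubit Cliffords and $L$ gates from $\Gamma$ whose symplectic matrix is $\rho(A)$. Each $\Gamma$-gate, and each free SWAP if these are allowed, can be rewritten as in Lemma~\ref{lem:two-qubit-clifford-classification}. Extra single-qubit gates do not matter here, because soundness only uses the symplectic part and the bound $L\le k$. This yields a circuit with at most $k$ CNOTs implementing $\rho(\Pi^{-1}A)$. Proposition~\ref{prop:soundness-permutation} then gives a Hamiltonian path. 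The key observation is that the lower-bound arguments of Sections~\ref{sec:bounding-recorder-counts}--\ref{sec:reduction} never used the phase vector, and the single-qubit cost $a\ge0$ only helps.

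\emph{Membership in NP.} This is the step needing some care, because single-qubit gates are now charged and must be included in the certificate. Any circuit of cost at most $k'$ has $L\le k'/b$. If $a>0$ it also has $s\le k'/a$. If $a=0$, consecutive single-qubit gates on a qubit can be merged into one gate, giving at most $N(L+1)$ single-qubit gates at no change in cost. In both cases we can cap $k'$ by the cost of a polynomial-size synthesis as in Theorem~\ref{thm:main}: Aaronson--Gottesman uses polynomially many gates including single-qubit ones, and the phases can be fixed with Paulis. Thus a yes-instance has a witness circuit of polynomial length. Verification multiplies full signed tableaux, including phase updates, in polynomial time and compares the result with $(F,p)$. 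The main obstacle is this bookkeeping for $a>0$ versus $a=0$ and keeping track of signs in the verifier; the hardness direction is immediate from the earlier results.
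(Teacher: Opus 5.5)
Your proposal is correct and follows the paper's proof essentially step for step. It uses the same target $(\rho(A),0)$ and threshold $bk$, derives $L\le k$ from $b>0$, discards signs and applies Lemma~\ref{lem:two-qubit-clifford-classification} and Proposition~\ref{prop:soundness-permutation}, and proves NP membership by capping the budget at the cost of a polynomial synthesis.

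Two small points need tightening:
\begin{enumerate}
\item In the free-SWAP variant, your certificate bound controls $L$ and $s$ but not the number of free SWAPs, so "a witness circuit of polynomial length" is not yet justified. As the paper does, first move all SWAPs to the output by relabelling, then realize the final permutation with at most $N-1$ SWAPs.
\item Your justification of the zero phase vector is too loose. Mapping basis states to basis states without phases is not enough: a NOT gate does that but sends $Z$ to $-Z$. The correct reason is that a CNOT circuit acts linearly as $\lvert x\rangle\mapsto\lvert Ax\rangle$. Equivalently, each CNOT sends $X$-type and $Z$-type Paulis to $+$-signed Paulis of the same type.
\end{enumerate}
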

\begin{proof}
  Use the symplectic matrix $F = \rho(A)$ and $k = R + m - 1$ as in the previous reductions, and let the threshold be $K = bk$, but take now the target to be the Clifford operator whose full signed tableau is $U_A = (\rho(A), 0) \in \Spn \times \bbF_2^{2N}$. A Hamiltonian path gives a CNOT-only circuit with $k$ gates. This circuit has cost $K$ and uses no single-qubit gates.

  Conversely, consider any circuit implementing $U_A$ with $S$ single-qubit gates and $L$ gates from $\Gamma$ with total cost $aS + bL$. Passing to its symplectic representation gives a circuit implementing $\rho(A)$, and if the cost is at most $K$, then $L \leq k$, so we may ignore the single-qubit costs and apply Lemma~\ref{lem:two-qubit-clifford-classification}. The resulting circuit has at most $L$ CNOT gates and possibly an output permutation, so Proposition~\ref{prop:soundness-permutation} gives the required Hamiltonian path. If free SWAP gates are allowed, first move them to the output by relabelling subsequent gates; then the same argument applies.

  For membership in NP, there is a polynomial upper bound $q(N)$ on the total cost of synthesizing any signed Clifford tableau since $\Gamma$ contains CNOT. If $K \geq q(N)$, then that synthesis supplies a polynomial-size certificate. Otherwise, the circuit has $L \leq q(N)/b$ two-qubit gates from $\Gamma$. Combining local gates between them leaves at most $N(L+1)$ single-qubit gates without increasing the cost. Note also that the full signed tableau may still be verified in polynomial time. Free SWAP gates can be moved to a final permutation represented by at most $N - 1$ SWAP gates.
\end{proof}
In the above, we assumed that $a$ and $b$ were integers; by multiplying both costs and the budget by a common denominator, the result extends to rational costs as well.

\subsection*{Acknowledgements}
This work was funded by the Innovation Fund Denmark (Grand Solutions) -- grant no. 5366-00005B.
\subsection*{Use of large language models}

Large language models were used during the preparation of the paper, in particular to prepare drafts of some of the proofs. We found that current frontier models did not reach the main results without further context but that given the construction of Li and Guan \cite{Li2026}, they were able to identify the characterization of CNOT gate actions in Clifford circuits, as well as the counting arguments used to ensure that single-qubit Clifford gates would not lead to shortcuts for the instances constructed from the Hamiltonian path problem. The resulting proof drafts required substantial digestion and verification, and the proofs presented here were rewritten from scratch by the author. Models were also used to provide preliminary auto-formalization, proofreading, and review.

\bibliographystyle{unsrturl}
\bibliography{references}

\end{document}